\newtheorem{thm}{Theorem}
\newtheorem{proposition}[thm]{Proposition}
\newtheorem{lemma}[thm]{Lemma}
\newtheorem{theorem}[thm]{Theorem}
\newtheorem{corollary}[thm]{Corollary}
\newtheorem{definition}[thm]{Definition}
\crefname{definition}{definition}{definitions}
\Crefname{definition}{Definition}{Definitions}
\newcommand{\eps}{\varepsilon}
\newcommand{\heps}{\hat{\eps}}
\newcommand{\cX}{\mathcal{X}}
\newcommand{\cY}{\mathcal{Y}}
\newcommand{\R}{\mathbb{R}}
\newcommand{\Z}{\mathbb{Z}}
\newcommand{\bbP}{\mathbb{P}}
\newcommand{\dr}[2]{\mathrm{D}_{\alpha}\left(#1\ \middle\|\ #2\right)}
\newcommand{\dra}[3]{\mathrm{D}_{#3}\left(#1\ \middle\|\ #2\right)}
\DeclareMathOperator{\DLap}{DLap}
\DeclareMathOperator{\Lap}{Lap}
\DeclareMathOperator{\csch}{csch}
\DeclareMathOperator{\hot}{hot}
\DeclareMathOperator{\supp}{supp}
\title{Exact zCDP Characterizations for\\ Fundamental Differentially Private Mechanisms}
\author{Charlie Harrison \\Google\\ \texttt{\small csharrison@google.com} \and Pasin Manurangsi \\Google Research\\ \texttt{\small pasin@google.com}}
\begin{document}

\maketitle

\begin{abstract}
Zero-concentrated differential privacy (zCDP) is a variant of differential privacy (DP) that is widely used partly thanks to its nice composition property. 
While a tight conversion from $\eps$-DP to zCDP exists for the \emph{worst-case} mechanism, many common algorithms satisfy stronger guarantees. In this work, we derive tight zCDP characterizations for several fundamental mechanisms. We prove that the tight zCDP bound for the $\eps$-DP Laplace mechanism is exactly $\eps + e^{-\eps} - 1$, confirming a recent conjecture by Wang~\cite{tight-laplace-tweet}. We further provide tight bounds for the discrete Laplace mechanism, $k$-Randomized Response (for $k \leq 6$), and RAPPOR. Lastly, we also provide a tight zCDP bound for the worst case bounded range mechanism.
\end{abstract}

\renewcommand{\arraystretch}{1.5}
\begin{center}
\begin{table}
\begin{center}
\begin{tabular}{ |c|c|c|c|c| }
 \hline
Mechanism & Tight zCDP bound & $\eps \to 0$ & Reference \\ 
 \hline
 $\eps$-DP & $\eps \tanh(\eps/2)$ 
 & $\eps^2 / 2$
 & \cite{DPorg-pdp-to-zcdp} \\ 
 \hline
 $\eps$-DP Laplace & $\eps + e^{-\eps} - 1$ 
 & $\eps^2 / 2$
 & \Cref{thm:zcdp-lap}\\ 
 \hline
 $\eps$-DP discrete Laplace & $\eps\left(1 - \frac{1}{\Delta} (1 - e^{-\eps}) \csch(\eps / \Delta)\right)$ 
& $\eps^2 / 2$
 & \Cref{thm:zcdp}
 \\ 
 \hline
 $\eps$-DP $k$-RR & $ \frac{\eps(e^{\eps} - 1)}{e^{\eps} - 1 + k}$ for $k \le 6$
 & $\eps^2 / k$ for $k \le 6$
 & \Cref{thm:zcdp-rr-tight}
 \\ 
 \hline
 $\eps$-DP RAPPOR & $\eps \tanh(\eps / 4)$ 
 & $\eps^2 / 4$
 & \Cref{prop:zcdp-rappor}
 \\ 
 \hline
 $\eps$-Bounded Range & $\frac{\eps}{e^\eps - 1} + \log\left(
\frac{e^{\eps }-1}{\eps }\right) - 1$ 
& $\eps^2 / 8$
& \Cref{thm:br-cdp}
\\ 
 \hline
\end{tabular}
\end{center}
\caption{Our main contributions: tight zCDP bounds for various $\eps$-DP mechanisms, as well as all $\eps$-bounded range mechanisms. Also included are the asymptotics for all tight bounds as $\eps \to 0$, which is common in practice in the large-composition regime.}
\label{tab:main}
\end{table}
\end{center}

\section{Introduction}
Differential privacy (DP) \cite{dwork-calibrating} is a strong, formal notion of privacy which bounds the information revealed by the output of an algorithm. In recent years, various \emph{relaxations} of $\eps$-differential privacy (aka \emph{pure}-DP) have emerged. Most relevant to this work are R{\'e}nyi differential privacy (RDP) \cite{Mironov17} and zero-concentrated differential privacy (zCDP) \cite{DworkR16,bun-steinke-16}, which have proven useful in regimes where many algorithms are composed together.

Obtaining tight privacy guarantees is of paramount importance in both theoretical and practical applications of differential privacy, so when the algorithms being composed in this manner are themselves $\eps$-DP, tight translations between $\eps$-DP and zCDP is critical. Bun and Steinke~\cite{bun-steinke-16} showed that any $\eps$-DP mechanism also satisfies $(\eps^2/2)$-zCDP, and it is easy to show that any $\eps$-DP mechanism satisfies $\eps$-zCDP. Noticeably, there is a gap between these two bounds, as the former is better in the low-$\eps$ regime but the latter is better in the large-$\eps$ regime. This gap was closed in \cite{DPorg-pdp-to-zcdp} which showed that any $\eps$-DP mechanism tightly satisfies $\left(\eps \tanh(\eps/2)\right)$-zCDP and the binary Randomized Response (RR) is the worst-case mechanism. Nevertheless, many fundamental mechanisms enjoy better zCDP guarantees than such a generic formula implies.
The main contribution of our work is in proving the tight zCDP guarantees of several fundamental $\eps$-DP mechanisms, including the Laplace mechanism, discrete Laplace mechanism, RAPPOR and $k$-RR (when $k$ is sufficiently small). The bounds are listed in \Cref{tab:main}. In the case of the Laplace mechanism, we confirm a bound conjectured in \cite{tight-laplace-tweet}, and formalize the RAPPOR bound first stated in \cite{harrison-rappor}. The bounds for the other mechanisms are novel to the best of our understanding.

In addition to the aforementioned mechanisms, we also provide tight zCDP bounds for \emph{$\eps$-bounded range ($\eps$-BR)} mechanisms. The notion of bounded range mechanisms was proposed by
\cite{durfee19-pay-what-you-get,dong-optimal-expo-composition} in an attempt to achieve better composition properties for the exponential mechanism \cite{msherry-exponential}. From this foundation, \cite{cesar-bounding-concentrating-br} proved that the exponential mechanism satisfies $(\eps^2 / 8)$-zCDP, which is only tight asymptotically as $\eps \to 0$. We show a tight bound for worst-case $\eps$-bounded range mechanisms (which includes the exponential mechanism) in \Cref{thm:br-cdp}.

\paragraph{Properties of our bounds.} For all of our tight zCDP bounds, we find that they are exactly the same as the KL divergence between output distributions on neighboring datasets.
That is, the RDP bound is achieved when the R{\'e}nyi divergence order approaches one. However, proving such a statement proved challenging and, for each mechanism, we proceed with different parameterization (and differentiation) to prove such an inequality. Another interesting--and arguably surprising--finding is that the zCDP bound for $k$-RR does \emph{not} occur at $\alpha \to 1$ for any sufficiently large $k$ (\Cref{lem:non-opt-rr}). In this case, we give looser non-tight bounds.
\section{Preliminaries}

\subsection{Privacy and Related Tools}

\paragraph{Privacy notions.} We recall the various notions of differential privacy considered in this work.
\begin{definition}[Differential privacy \cite{dwork-calibrating}] \label{def:dp}
A randomized algorithm $M: \mathcal{X}^n \to \mathcal{Y}$ satisfies \emph{$\eps$-differential privacy ($\eps$-DP)} if, for all neighboring\footnote{For the purpose of our work, the exact definition of neighboring dataset is unimportant. Nevertheless, one may think of $x, x'$ as neighbors iff they differ on a single row for simplicity.} inputs $x, x' \in \mathcal{X}^n$ and for all $S \subseteq \mathcal{Y}$, $
\log \left(
\frac{\mathbb{P}(M(x) \in S)}{\mathbb{P}(M(x') \in S)}\right) \le \eps
$.
\end{definition}

\begin{definition}[R{\'e}nyi divergence \cite{renyi61}]
Let $P$ and $Q$ be probability distributions on $\Omega$. Then the R{\'e}nyi divergence between $P$ and $Q$ at order $\alpha$, denoted $\dr{P}{Q}$, is 

$$
\dr{P}{Q} = \frac{1}{\alpha-1} \log\left(\int_\Omega P(x)^\alpha Q(x)^{1-\alpha} \mathrm{d}x\right),
$$

where $P(\cdot), Q(\cdot)$ denote the probability mass/density functions of $P$ and $Q$, respectively.\footnote{In this work, $P$ and $Q$ are always absolutely continuous with respect to one another.} The R{\'e}nyi divergence at $\alpha = 1$ is defined as 

$$
\dra{P}{Q}{1} = \lim_{\alpha \to 1} \dr{P}{Q} = D_{KL}(P || Q) = \int_{\Omega} P(x) \log \left(\frac{P(x)}{Q(x)} \right)\mathrm{d}x.
$$
\end{definition}

For notational convenience, we will sometimes write random variables in place of their distributions in $\dr{\cdot}{\cdot}$ when there is no ambiguity.

\begin{definition}[R{\'e}nyi DP \cite{Mironov17}] \label{def:rdp}
A randomized algorithm $M: \mathcal{X}^n \to \mathcal{Y}$ satisfies \emph{$(\alpha, \hat{\eps})$-R{\'e}nyi differential privacy ($(\alpha, \hat{\eps})$-RDP)} if, for all neighboring inputs $x, x' \in \mathcal{X}^n$,
$
\dr{M(x)}{M(x')} \le \hat{\eps}.
$

Furthermore, we say that $M$ is \emph{tightly-$(\alpha, \hat{\eps})$-RDP} if it is $(\alpha, \hat{\eps})$-RDP and it is not $(\alpha, \hat{\eps}')$-RDP for any $\hat{\eps}' < \hat{\eps}$.
\end{definition}

\begin{definition}[Concentrated DP \cite{bun-steinke-16}]
A randomized algorithm $M: \mathcal{X}^n \to \mathcal{Y}$ satisfies \emph{$\rho$-zero concentrated differential privacy ($\rho$-zCDP)} if it satisfies $(\alpha, \rho\alpha)$-RDP for all $\alpha \geq 1$.

Furthermore, we say that $M$ is \emph{tightly-$\rho$-zCDP} if it is $\rho$-zCDP and it is not $\rho'$-zCDP for any $\rho' < \rho$.
\end{definition}

Finally, we recall the notion of bounded range mechanisms.

\begin{definition}[Bounded range DP \cite{durfee19-pay-what-you-get, dong-optimal-expo-composition, DPorg-exponential-mechanism-bounded-range}] \label{def:br}
A randomized algorithm $M: \mathcal{X}^n \to \mathcal{Y}$ satisfies \emph{$\eta$-bounded range ($\eta$-BR)} if, for all neighboring inputs $x, x' \in \mathcal{X}^n$ differing on a single row and for all $S \subseteq \mathcal{Y}$, there exists a $t \in \mathbb{R}$ such that
$
\log \left(
\frac{\mathbb{P}(M(x) \in S)}{\mathbb{P}(M(x') \in S)}\right) \in [t, t + \eta].
$
\end{definition}
It is immediate from the definition that $\eta$-BR implies $\eta$-DP, and $\eps$-DP implies $(2 \eps)$-BR.

\paragraph{Sensitivity.} We will consider mechanisms that simply adds noise to the output of a function. The privacy guarantees of this mechanism depends on the sensitivity of the function, defined as follows.
\begin{definition}[Sensitivity]
The sensitivity of a function $g:  \mathcal{X}^n \to \mathbb{R}$ is $\Delta(g) := \max_{X, X'} |g(X) - g(X')|$ where the maximum is over all neighboring inputs $X$ and $X'$. 
\end{definition}

\paragraph{Noise distributions.}
\begin{itemize}
\item The continuous Laplace distribution with scale parameter $\lambda$ is denoted  by $\Lap(\lambda)$ and has support on $\mathbb{R}$. Its probability density function is
$
f_{\Lap(\lambda)}(x) = \frac{1}{2 \lambda} e^{-|x| / \lambda}.
$

\item The discrete Laplace distribution with scale parameter $a$ is denoted by $\DLap(a)$ and has support on $\mathbb{Z}$. Its probability mass function is 
$
f_{\DLap(a)}(x) = \tanh(a/2) e^{-a|x|}.
$
\end{itemize}
\paragraph{Existing tight bounds.} We will leverage (and extend) the following existing tight bounds.

\begin{proposition}[DP to RDP \cite{DPorg-pdp-to-zcdp}]\label{prop:dp-to-rdp}
Let $M: \mathcal{X}^n \to \mathcal{Y}$ satisfy $\eps$-DP, then
$M$ satisfies $(\alpha, \hat{\eps}(\alpha))$ RDP for all $\alpha > 1$ where
$\hat{\eps}(\alpha) = \frac{1}{\alpha -1}\log\left(
\frac{e^{\alpha \eps} + e^{\eps(1-\alpha)}}{e^{\eps} + 1}
\right)$.
\end{proposition}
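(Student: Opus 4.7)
The plan is to identify the worst-case $\eps$-DP pair of distributions $(P,Q) = (M(x), M(x'))$ that maximizes the Rényi divergence at order $\alpha$, show that this worst case is the binary randomized response, and then compute the resulting divergence in closed form.

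First I would pass from the set-based definition of $\eps$-DP to a pointwise bound on the likelihood ratio. Writing $r(y) := P(y)/Q(y)$ (using densities where appropriate), the standard fact that $P(S) \leq e^\eps Q(S)$ for all measurable $S$ implies $r(y) \in [e^{-\eps}, e^\eps]$ for $Q$-almost every $y$. Then I rewrite the Rényi divergence as an expectation under $Q$:
\[
\dr{P}{Q} = \frac{1}{\alpha - 1} \log \mathbb{E}_{Y \sim Q}\bigl[r(Y)^\alpha\bigr].
\]
So the task reduces to upper-bounding $\mathbb{E}_Q[r^\alpha]$ subject to two constraints: (i) $r$ takes values in $[e^{-\eps}, e^\eps]$ and (ii) $\mathbb{E}_Q[r] = 1$ (since $P$ is a probability distribution).

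Next, I would invoke convexity. Since $t \mapsto t^\alpha$ is convex on $[e^{-\eps}, e^\eps]$ for $\alpha > 1$, the maximum of $\mathbb{E}_Q[r^\alpha]$ under the mean-one constraint is achieved when $r$ is supported on the two endpoints $\{e^{-\eps}, e^\eps\}$. Concretely, for any value $r(y)$ strictly in the interior, one can split its mass onto the two endpoints while preserving $\mathbb{E}_Q[r]$ and (weakly) increasing $\mathbb{E}_Q[r^\alpha]$. Solving $p \cdot e^\eps + (1-p) \cdot e^{-\eps} = 1$ gives $p = 1/(e^\eps + 1)$, so
\[
\mathbb{E}_Q[r^\alpha] \le \frac{e^{\alpha \eps}}{e^\eps + 1} + \frac{e^\eps \cdot e^{-\alpha \eps}}{e^\eps + 1} = \frac{e^{\alpha \eps} + e^{\eps(1-\alpha)}}{e^\eps + 1}.
\]
Substituting back into the formula for $\dr{P}{Q}$ yields exactly $\hat{\eps}(\alpha)$, and this bound is tight because it is attained by the binary randomized response distributions $(P,Q)$ with masses $(e^\eps/(e^\eps+1), 1/(e^\eps+1))$ and its reverse.

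I do not anticipate any serious obstacle: the convexity step is the heart of the argument and is elementary, and the measure-theoretic reduction to a pointwise ratio bound is standard. The only mildly delicate point is making sure the two-point reduction is rigorous in the general measurable setting, which can be handled either by a direct rearrangement argument (moving mass from any interior value of $r$ to the endpoints while preserving $\mathbb{E}_Q[r]$) or, equivalently, by noting that the supremum of $\mathbb{E}_Q[r^\alpha]$ over all $Q$-measurable $r$ with the two constraints equals the supremum over mixtures of the two extreme points, by Choquet-type reasoning.
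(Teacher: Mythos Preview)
The paper does not give its own proof of this proposition: it is stated in the preliminaries as an existing result cited from \cite{DPorg-pdp-to-zcdp}. That said, your argument is correct, and it is precisely the technique the paper attributes to \cite{bun-steinke-16,DPorg-pdp-to-zcdp} and reproduces in the proof of \Cref{lem:br-rdp-one} for the bounded range case (randomized rounding to the two endpoints, $\mathbb{E}[A(z)]=z$, then Jensen via convexity of $t\mapsto t^\alpha$), so your approach matches the intended one.
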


\begin{proposition}[DP to zCDP \cite{DPorg-pdp-to-zcdp}]\label{prop:dp-to-zcdp}
Let $M: \mathcal{X}^n \to \mathcal{Y}$ satisfy $\eps$-DP, then
$M$ satisfies $\eps \cdot \tanh(\eps / 2)$-zCDP.
\end{proposition}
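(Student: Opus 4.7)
The plan is to derive this as a corollary of \Cref{prop:dp-to-rdp}. Since $\rho$-zCDP requires $(\alpha, \rho \alpha)$-RDP for every $\alpha \geq 1$, and \Cref{prop:dp-to-rdp} already yields $(\alpha, \heps(\alpha))$-RDP with
$$
\heps(\alpha) = \frac{1}{\alpha - 1} \log\left(\frac{e^{\alpha \eps} + e^{\eps(1-\alpha)}}{e^\eps + 1}\right),
$$
the statement reduces to verifying the pointwise bound $\heps(\alpha) \leq \alpha \cdot \eps \tanh(\eps/2)$ for every $\alpha > 1$. Writing $\rho := \eps \tanh(\eps/2) = \eps(e^\eps - 1)/(e^\eps + 1)$ and clearing the $(\alpha - 1)$ factor, I would study
$$
f(\alpha) := \alpha(\alpha-1)\rho - \log\left(\frac{e^{\alpha \eps} + e^{\eps(1-\alpha)}}{e^\eps + 1}\right),
$$
which satisfies $f(1) = 0$ by inspection. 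It therefore suffices to prove $f'(\alpha) \geq 0$ on $[1, \infty)$.

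Differentiating term by term and simplifying via $\frac{e^u - e^v}{e^u + e^v} = \tanh\!\left(\frac{u-v}{2}\right)$ with $u = \alpha \eps$, $v = \eps(1-\alpha)$ gives
$$
f'(\alpha) = (2\alpha - 1)\rho - \eps \tanh\!\left(\frac{(2\alpha - 1)\eps}{2}\right).
$$
Substituting $t = (2\alpha - 1)\eps/2$, which ranges over $[\eps/2, \infty)$ as $\alpha$ ranges over $[1, \infty)$, the desired inequality $f'(\alpha) \geq 0$ rearranges into $\tanh(t)/t \leq \tanh(\eps/2)/(\eps/2)$. I would then invoke the elementary fact that $x \mapsto \tanh(x)/x$ is non-increasing on $(0, \infty)$: one checks that $x\,\mathrm{sech}^2(x) - \tanh(x)$ vanishes at $x = 0$ and has non-positive derivative $-2x\,\mathrm{sech}^2(x)\tanh(x)$ on $[0,\infty)$, so it stays non-positive, giving the monotonicity.

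This monotonicity of $\tanh(x)/x$ is really the only analytic content, and spotting the reduction to it is the main (modest) obstacle; once it is in place, the chain $f'(\alpha) \geq 0 \Rightarrow f(\alpha) \geq 0 \Rightarrow \heps(\alpha) \leq \alpha \rho$ assembles immediately. As a sanity check, the bound is saturated in the limit $\alpha \to 1$ (both $f$ and $f'$ vanish at $\alpha = 1$), consistent with the paper's recurring observation that the KL divergence attains the worst case among Rényi orders and with the tightness of $\eps\tanh(\eps/2)$ on binary randomized response.
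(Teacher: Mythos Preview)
The paper does not actually prove this proposition; it is quoted from \cite{DPorg-pdp-to-zcdp} under ``Existing tight bounds'' with no argument supplied, so there is no in-paper proof to compare against. Your derivation is correct: the reduction to $f'(\alpha)\ge 0$ via the substitution $t=(2\alpha-1)\eps/2$ is clean, and the monotonicity of $x\mapsto \tanh(x)/x$ is exactly the right lemma. This is essentially the argument in the cited reference, and it also mirrors the paper's own template for the Laplace mechanism (\Cref{thm:zcdp-lap}), where one differentiates the gap and reduces to monotonicity of $\sinh(x)/x$.
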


Along the way to proving the above conversions, Steinke also proved that the worst case occurs at $\alpha \to 1$, which will state explicitly below for convenience.

\begin{proposition} \label{cor:dlap-rdp-one}
Let $M: \mathcal{X}^n \to \mathcal{Y}$ satisfy $\eps$-DP. Then for every $\alpha > 1, \eps > 0$, $M$ satisfies $(\alpha, \hat{\eps}(\alpha))$ RDP where $\eps \cdot \tanh(\eps/2) - \frac{\hat{\eps}(\alpha)}{\alpha} \ge 0$.
\end{proposition}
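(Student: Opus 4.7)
The plan is to combine \Cref{prop:dp-to-rdp} (which immediately yields the $(\alpha, \hat{\eps}(\alpha))$-RDP bound) with a direct calculus argument establishing the scalar inequality $\hat{\eps}(\alpha)/\alpha \le \eps \tanh(\eps/2)$ for all $\alpha > 1$. Setting $T := \eps \tanh(\eps/2)$ and $g(\alpha) := (\alpha - 1)\hat{\eps}(\alpha) = \log\left(\frac{e^{\alpha\eps} + e^{\eps(1-\alpha)}}{e^\eps + 1}\right)$, the goal reduces to showing $F(\alpha) := g(\alpha) - \alpha(\alpha-1)T \le 0$ on $[1,\infty)$.

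I will prove this by matching value, slope, and concavity at $\alpha = 1$. Differentiating and simplifying using the identity $\frac{e^a - e^b}{e^a + e^b} = \tanh\!\left(\frac{a-b}{2}\right)$, I get $g'(\alpha) = \eps \tanh(\eps(\alpha - 1/2))$ and $g''(\alpha) = \eps^2 \operatorname{sech}^2(\eps(\alpha - 1/2))$. In particular $g(1) = 0$ and $g'(1) = \eps \tanh(\eps/2) = T$, so $F(1) = 0$ and $F'(1) = 0$. The remaining step is to establish $F'' \le 0$ on $[1,\infty)$, whereupon integrating twice (or Taylor's theorem with remainder) forces $F \le 0$ throughout.

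For the concavity step, I observe that on $[1, \infty)$ we have $|\eps(\alpha - 1/2)| \ge \eps/2$, and since $\operatorname{sech}^2$ is even and decreasing on $[0,\infty)$, this gives $g''(\alpha) \le \eps^2 \operatorname{sech}^2(\eps/2)$. Consequently $F''(\alpha) \le \eps^2 \operatorname{sech}^2(\eps/2) - 2T$, and a short algebraic manipulation (clearing denominators by $\cosh^2(\eps/2)$ and using $2\sinh(\eps/2)\cosh(\eps/2) = \sinh \eps$) rewrites this as $\eps(\eps - \sinh \eps)/\cosh^2(\eps/2)$, which is non-positive by the elementary bound $\sinh \eps \ge \eps$ for $\eps \ge 0$.

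The only non-routine piece is identifying the correct comparison quadratic $\alpha(\alpha-1)T$; once one notes that $\hat{\eps}(\alpha) \to T$ as $\alpha \to 1$, forcing $g(1) = 0$ and $g'(1) = T$, the quadratic $\alpha(\alpha - 1) T$ that matches both the value and slope at $\alpha = 1$ is natural, and the remaining concavity check collapses to the single textbook inequality $\sinh \eps \ge \eps$. I therefore do not anticipate any real obstacle.
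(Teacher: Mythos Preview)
Your proof is correct. Note, however, that the paper does not actually prove this proposition: it appears in the Preliminaries as a result attributed to Steinke~\cite{DPorg-pdp-to-zcdp}, so there is no in-paper argument to compare against. Your approach---setting $g(\alpha)=(\alpha-1)\hat{\eps}(\alpha)$, computing $g'(\alpha)=\eps\tanh(\eps(\alpha-\tfrac12))$ and $g''(\alpha)=\eps^2\operatorname{sech}^2(\eps(\alpha-\tfrac12))$, matching value and slope of $F(\alpha)=g(\alpha)-\alpha(\alpha-1)T$ at $\alpha=1$, and bounding $F''$ via $\sinh\eps\ge\eps$---is clean and self-contained. It is in the same spirit as, but slightly different from, the concavity machinery (\Cref{lem:concavity}) the paper uses elsewhere: that lemma would ask you to show $\hat{\eps}(\alpha)$ itself is concave and $\hat{\eps}'(1)\le\hat{\eps}(1)$, whereas you bypass analyzing derivatives of $g(\alpha)/(\alpha-1)$ by comparing $g$ directly to the matching quadratic.
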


\subsection{Some Useful Functions and Their Properties}

We recall the definitions of the hyperbolic functions. The hyperbolic sine and cosine are defined for any $x \in \mathbb{R}$ by
\[ \sinh(x) = \frac{e^x - e^{-x}}{2} \quad \text{and} \quad \cosh(x) = \frac{e^x + e^{-x}}{2}. \]
From these, the other four hyperbolic functions are derived:
\[ \tanh(x) = \frac{\sinh(x)}{\cosh(x)}, \quad \coth(x) = \frac{\cosh(x)}{\sinh(x)}, 
\quad \text{and} \quad \csch(x) = \frac{1}{\sinh(x)}. \]
These functions are related by the fundamental identity $\cosh^2(x) - \sinh^2(x) = 1$. We also note the parity of the basic functions: $\cosh(x)$ is an even function, satisfying $\cosh(-x) = \cosh(x)$, while $\sinh(x)$ is an odd function, satisfying $\sinh(-x) = -\sinh(x)$.

The function $\frac{\sinh(x)}{x}$ will appear often in the proofs, so we list a few of its properties below. Throughout, we let $\frac{\sinh 0}{0} = \lim_{x \to 0} \frac{\sinh(x)}{x} = 1$ for convenience.

\begin{lemma} \label{lem:sinhx-divx-inc}
$\frac{\sinh x}{x}$ is increasing for $x \in [0, \infty)$.
\end{lemma}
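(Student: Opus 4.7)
The plan is to reduce monotonicity of $f(x) = \frac{\sinh x}{x}$ to a sign statement about a simpler auxiliary function, which can then be handled by a one-step differentiation.

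First I would compute
\[
f'(x) = \frac{x \cosh x - \sinh x}{x^2},
\]
so the claim reduces to showing $g(x) := x \cosh x - \sinh x \ge 0$ for all $x \ge 0$. Then I would observe that $g(0) = 0$, and differentiate once more to obtain $g'(x) = \cosh x + x \sinh x - \cosh x = x \sinh x$, which is clearly nonnegative for $x \ge 0$ since $\sinh$ is nonnegative there. Hence $g$ is nondecreasing on $[0,\infty)$ and starts at $0$, so $g \ge 0$ throughout, making $f'(x) \ge 0$ on $(0,\infty)$. Continuity at $x=0$ under the convention $f(0) = 1$ then yields the claim on all of $[0,\infty)$.

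There is essentially no obstacle: the only mild subtlety is handling $x = 0$, which is resolved by the stated convention $\frac{\sinh 0}{0} = 1$ and by noting that the power series expansion $\frac{\sinh x}{x} = \sum_{n \ge 0} \frac{x^{2n}}{(2n+1)!}$ has only nonnegative even-degree terms, which gives an alternative one-line proof that the function is increasing in $|x|$ (and hence in $x$ on $[0,\infty)$). I would present the derivative-based argument for concreteness but mention the power-series viewpoint as well if space permits.
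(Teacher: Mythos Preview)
Your proof is correct and follows essentially the same approach as the paper: both compute $f'(x) = \frac{x\cosh x - \sinh x}{x^2}$ and argue the numerator is nonnegative. The only minor difference is that the paper factors the numerator as $\sinh x\,(\coth x - 1/x)$ and invokes the known inequality $\tanh x < x$, whereas you show $g(x) = x\cosh x - \sinh x \ge 0$ directly via $g(0)=0$ and $g'(x) = x\sinh x \ge 0$, which is slightly more self-contained.
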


\begin{proof}
We have $\frac{d \frac{\sinh x}{x}}{dx} = \frac{x \cosh x - \sinh x}{x^2} = \frac{\sinh x \left(\coth x - \frac{1}{x}\right)}{x}$. It is well known that $\tanh(x) < x$ for all $x > 0$. (See e.g. \cite{tanhx-greater-than-x}.) Thus, $x, \sinh x$ and $\coth(x) - 1/x$ are all non-negative for $x > 0$.
\end{proof}

\begin{lemma} \label{lem:log-sinh-over-x-convex-and-superadditive}
The function $\phi(x) = \log\left(\frac{\sinh x}{x}\right)$ is strictly convex and superadditive for $x \in [0, \infty)$.
\end{lemma}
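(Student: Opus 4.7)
The plan is to split the lemma into two parts: strict convexity via the second derivative, and superadditivity as a direct consequence of strict convexity combined with $\phi(0) = 0$.

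For strict convexity on $(0,\infty)$, I would write $\phi(x) = \log \sinh x - \log x$ and differentiate twice to obtain $\phi'(x) = \coth x - 1/x$ and
\[
\phi''(x) = \frac{1}{x^2} - \csch^2 x = \frac{\sinh^2 x - x^2}{x^2 \sinh^2 x}.
\]
Strict positivity of $\phi''$ then reduces to the elementary inequality $\sinh x > x$ for $x > 0$, which follows immediately from the Taylor series $\sinh x = x + x^3/6 + x^5/120 + \cdots$. Combined with continuity of $\phi$ at $0$ (using the convention $\phi(0) = \log 1 = 0$ already fixed in the paper), this extends strict convexity to all of $[0,\infty)$.

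For superadditivity, fix $x, y > 0$ (the cases $x = 0$ or $y = 0$ are immediate since $\phi(0) = 0$). Setting $\lambda = x/(x+y) \in (0,1)$, I would write $x = \lambda(x+y) + (1-\lambda)\cdot 0$ and $y = (1-\lambda)(x+y) + \lambda \cdot 0$, so convexity together with $\phi(0) = 0$ yields $\phi(x) \leq \lambda \phi(x+y)$ and $\phi(y) \leq (1-\lambda) \phi(x+y)$; adding these gives $\phi(x) + \phi(y) \leq \phi(x+y)$.

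The main obstacle is really only the second-derivative bound, which collapses to the standard fact $\sinh x > x$; once convexity is in hand and one observes $\phi(0) = 0$, superadditivity is a one-line convex-combination argument that needs no further analytic input.
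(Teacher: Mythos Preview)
Your proposal is correct and follows essentially the same approach as the paper: both compute $\phi''(x) = 1/x^2 - \csch^2 x$ and reduce strict convexity to $\sinh x > x$, then deduce superadditivity from convexity together with $\phi(0)=0$. The only cosmetic difference is that the paper proves superadditivity by showing $H(x)=\phi(x+y)-\phi(x)$ is increasing (via $\phi'$ increasing), whereas you use the direct convex-combination argument with $\lambda = x/(x+y)$; both are standard one-line consequences of convexity plus $\phi(0)=0$.
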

\begin{proof}
First, we compute the derivatives of $\phi(x) = \log(\sinh x) - \log x$.
\begin{align*}
\phi'(x) &= \frac{\cosh x}{\sinh x} - \frac{1}{x} = \coth x - \frac{1}{x} \\
\phi''(x) &= -\csch^2 x + \frac{1}{x^2} = \frac{1}{x^2} - \frac{1}{\sinh^2 x}
\end{align*}
Since $\sinh x \geq x$ for all $x > 0$ (which follows from \Cref{lem:sinhx-divx-inc}), we have $\phi''(x)$.

To prove superadditivity, since $\phi$ is convex, $\phi'$ is increasing. For $y\geq 0$, let $H(x) = \phi(x+y) - \phi(x)$. Then, $H'(x) = \phi'(x+y) - \phi'(x) \geq 0$, so $H(x)$ is increasing.
Thus, for any $x \geq 0$, $H(x) \geq H(0)$, which implies that $\phi(x + y) \geq \phi(x) + \phi(y) - \phi(0) = \phi(x) + \phi(y)$ as desired.
\end{proof}
What follows is a simple lemma which allows us to leverage concavity of the R{\'e}nyi curve.

\begin{lemma}\label{lem:concavity}
    Let $f(x)$ be concave and differentiable on the domain $[1, \infty)$. If $f'(1) \le f(1)$, then $g(x) = f(x) / x$ is maximized at $x = 1$.
\end{lemma}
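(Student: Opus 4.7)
The plan is to show that $g(x) = f(x)/x$ is non-increasing on $[1,\infty)$, from which the conclusion is immediate since the maximum over $[1,\infty)$ of a non-increasing function is attained at the left endpoint $x = 1$. The natural route is to compute
\[ g'(x) = \frac{xf'(x) - f(x)}{x^2}, \]
so the task reduces to proving that the auxiliary function $h(x) := x f'(x) - f(x)$ is non-positive on $[1,\infty)$.

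To control $h$, I would check its boundary value and monotonicity. At $x = 1$, $h(1) = f'(1) - f(1) \le 0$ by the hypothesis. Differentiating, $h'(x) = f'(x) + x f''(x) - f'(x) = x f''(x)$, which is non-positive on $[1,\infty)$ because $x \ge 1 > 0$ and $f$ is concave (so $f'' \le 0$). Hence $h$ is non-increasing on $[1,\infty)$ and starts at a value $\le 0$, which gives $h(x) \le 0$ on the whole domain. Consequently $g'(x) \le 0$ on $[1,\infty)$, so $g$ is non-increasing there and attains its maximum at $x = 1$, as desired.

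There is essentially no main obstacle here: the argument is a two-line calculus exercise once one writes $h$ in the right form. The only mild subtlety is that concavity of $f$ needs to be interpreted via $f'' \le 0$ (rather than merely the chord inequality), which is fine since $f$ is assumed differentiable; if one wished to avoid assuming $f''$ exists, one could instead use the fact that $f'$ is non-increasing for concave differentiable $f$ and argue monotonicity of $h$ directly from $h(x) - h(1) = \int_1^x t f''(t)\, dt$ or, more elementarily, from the identity $h(x) - h(1) = x(f'(x) - f'(1)) + (f'(1)(x-1) - (f(x) - f(1)))$, both terms being $\le 0$ by concavity.
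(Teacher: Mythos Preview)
Your proof is correct and follows essentially the same approach as the paper: both compute $g'(x) = (xf'(x)-f(x))/x^2$, set $h(x) = xf'(x) - f(x)$, note $h(1) = f'(1)-f(1) \le 0$, and use $h'(x) = xf''(x) \le 0$ from concavity to conclude $h \le 0$ on $[1,\infty)$. Your additional remark about avoiding the assumption that $f''$ exists is a nice refinement not present in the paper.
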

\begin{proof}
    It suffices to show $g'(x) = \frac{x f'(x) - f(x)}{x^2} \le 0\iff h(x) = x f'(x) - f(x) \le 0$. This holds, as $h(1) = f'(1) - f(1) \le 0$ by construction, and $h'(x) = x f''(x) \le 0$ by concavity.
\end{proof}

\section{Central DP Mechanisms}

In this section, we derive the tight zCDP bounds for  Laplace and discrete Laplace mechanisms. In both cases, we show that the tight zCDP bound occurs when $\alpha \to 1^+$.

\subsection{Laplace Mechanism}

The Laplace mechanism was the first mechanism proposed alongside the definition of differential privacy in \cite{dwork-calibrating}. For any function $g: \cX^n \to \R$, it simply works by outputting $g(x) + Z$ where $Z \sim \Lap(\Delta(g) / \eps)$. This mechanism achieves $\eps$-DP. 

In the work proposing RDP~\cite{Mironov17}, Mironov showed that the Laplace mechanism is tightly-$(\alpha, \hat{\eps}(\alpha))$-RDP for $\alpha > 1$ where  
$$
\hat{\eps}_{L}(\alpha) = 
\frac{1}{\alpha-1}\log\left(
\frac{\alpha}{2 \alpha - 1}e^{(\alpha - 1) / \lambda}
+ \frac{\alpha - 1}{2 \alpha - 1}e^{-\alpha / \lambda}
\right)
$$

and $\lambda = \frac{\Delta(g)}{\eps}$. Given the above bound, it has been conjectured \cite{tight-laplace-tweet} that the tight zCDP bound occurs when $\alpha \to 1$, which yields $\lim_{\alpha \to 1} \heps(\alpha) = \eps + e^{-\eps} - 1$. However, the proof of such a statement has never been published. Below we give a proof of this conjecture.

\begin{theorem}[zCDP for Laplace] \label{thm:zcdp-lap}
The $\eps$-DP Laplace mechanism is tightly-$(\eps + e^{-\eps} - 1)$-zCDP.
\end{theorem}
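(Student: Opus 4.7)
The plan is to apply \Cref{lem:concavity} with $f(\alpha) := \heps_L(\alpha)$, which will reduce the theorem to verifying (a) $\heps_L$ is concave on $[1, \infty)$, and (b) $\heps_L'(1) \le \heps_L(1)$. Once these hold, $\heps_L(\alpha)/\alpha$ attains its maximum at $\alpha = 1$, and the tight zCDP bound equals $\heps_L(1) = K := \eps + e^{-\eps} - 1$. Tightness is then automatic because $\heps_L(1)$ is itself the KL divergence, which always lower-bounds any zCDP parameter.

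A key first step will be the reparametrization $u := (\alpha - 1/2)\eps$. After multiplying the numerator in Mironov's formula by $e^{\eps/2}$ and simplifying via hyperbolic identities, one obtains
\[
\heps_L(\alpha) \;=\; \eps \cdot \frac{\psi(u) - \psi(\eps/2)}{u - \eps/2}, \qquad \psi(u) \;:=\; \log\!\left(2\cosh u + \frac{\eps \sinh u}{u}\right),
\]
so that $\heps_L(\alpha)$ is exactly $\eps$ times the secant slope of $\psi$ based at $u = \eps/2$. Taking $u \to \eps/2$ via L'Hopital then yields $\heps_L(1) = \eps\,\psi'(\eps/2)$, which simplifies to $K$, matching the known KL divergence between Laplace distributions at unit offset. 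For condition (b), Taylor expansion of $\sigma(u) := (\psi(u) - \psi(\eps/2))/(u - \eps/2)$ around $u = \eps/2$ yields $\heps_L'(1) = \eps^2 \psi''(\eps/2)/2$. The required bound $\heps_L'(1) \le K$ then reduces to an explicit inequality in $\eps$ alone, which I would verify by observing that equality holds at $\eps = 0$ and that its $\eps$-derivative is non-negative.

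The hard part will be (a), the concavity of $\heps_L$ on $[1, \infty)$. I plan to reduce it to the third-derivative condition $\psi'''(u) \le 0$ for all $u \ge \eps/2$ via the identity
\[
\heps_L''(\alpha) \cdot (u - \eps/2)^3 \;=\; \eps^3 \int_0^{u - \eps/2} \tau^2 \,\psi'''(\eps/2 + \tau)\, d\tau,
\]
which follows by differentiating the secant-slope expression for $\heps_L$ twice and integrating by parts (the boundary terms vanish at $\tau = 0$). The remaining task is then to show that $\psi''' \le 0$ on $[\eps/2, \infty)$. Writing $\psi = \log(2h + \eps g)$ with $h(u) = \cosh u$ and $g(u) = \sinh u/u$, and expanding $\psi'''$ via the chain rule yields a rational expression in $h$, $g$, and their derivatives, cleaned up by the identities $g'(u) = (h - g)/u$, $h'(u) = u g(u)$, and their iterates. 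Combining this with the log-convexity and superadditivity of $\log(\sinh u/u)$ from \Cref{lem:log-sinh-over-x-convex-and-superadditive}, together with standard hyperbolic inequalities such as $\sinh u \ge u$ and $\cosh u \ge 1 + u^2/2$, one can reorganize $\psi'''$ into a manifestly non-positive form. I expect this final algebraic reorganization to be the main obstacle of the proof.
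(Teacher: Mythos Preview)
Your approach is genuinely different from the paper's, and the hardest step is left open. The paper does \emph{not} go through concavity of $\heps_L$ in $\alpha$ at all. Instead, it fixes $\alpha>1$ and treats
\[
f(\eps)\;=\;(\eps+e^{-\eps}-1)\;-\;\tfrac{1}{\alpha}\,\heps_L(\alpha)
\]
as a function of $\eps$. Then $f(0)=0$, and a direct computation gives
\[
f'(\eps)\;=\;\frac{2e^{\alpha\eps}\,\alpha(\alpha-1)\eps}{(\alpha-1)e^{\eps}+\alpha e^{2\alpha\eps}}
\left(\frac{\sinh(\alpha\eps)}{\alpha\eps}-\frac{\sinh((\alpha-1)\eps)}{(\alpha-1)\eps}\right)\;\ge\;0,
\]
where non-negativity is just the monotonicity of $x\mapsto\sinh(x)/x$ (\Cref{lem:sinhx-divx-inc}). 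That is the entire argument; tightness follows from $\alpha\to1$. Differentiating in $\eps$ rather than $\alpha$ is what makes the proof collapse to a single application of \Cref{lem:sinhx-divx-inc}.

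Your setup is correct as far as it goes: the reparametrization $u=(\alpha-\tfrac12)\eps$ really does give $\heps_L(\alpha)=\eps\cdot\frac{\psi(u)-\psi(\eps/2)}{u-\eps/2}$ with your $\psi$, and the integration-by-parts identity reducing concavity of the secant slope to $\psi'''\le 0$ is valid. Your computations of $\heps_L(1)=\eps\,\psi'(\eps/2)$ and $\heps_L'(1)=\tfrac{\eps^2}{2}\psi''(\eps/2)$ are also right. But the step you flag as ``the main obstacle'' really is one: expanding $\psi'''$ for $\psi(u)=\log\!\big(2\cosh u+\eps\,\tfrac{\sinh u}{u}\big)$ yields a messy rational expression in $u$ and $\eps$, and the tools you name (superadditivity and log-convexity of $\log(\sinh u/u)$ from \Cref{lem:log-sinh-over-x-convex-and-superadditive}) do not bear on the sign of $\psi'''$ in any evident way. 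So the proposal is plausible but incomplete at its crux, whereas the paper's $\eps$-differentiation route finishes the proof in a few lines.
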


\begin{proof}
We start by showing that the mechanism is $(\eps + e^{-\eps} - 1)$-zCDP. From the aforementioned RDP bound, our goal is to prove that
\[
f(\eps) = (\eps + e^{-\eps} - 1) - \frac{1}{\alpha} \cdot \hat{\eps}_{L}(\alpha) = (\eps + e^{-\eps} - 1) - \frac{1}{\alpha(\alpha-1)} \log\left( \frac{\alpha}{2\alpha-1} e^{(\alpha-1)\eps} + \frac{\alpha-1}{2\alpha-1} e^{-\alpha \eps} \right)
\]
is non-negative for all $\eps > 0$ and $\alpha > 1$. To do this, it suffices to argue that $f(0) = 0$ and $f'(\eps) \ge 0$ for $x > 0$. Simple calculation confirms that $f(0) = 0$, and that
\begin{align*}
f'(\eps) &= (1 - e^{-\eps}) - \frac{e^{(\alpha-1)\eps} - e^{-\alpha \eps}}{\alpha e^{(\alpha-1)\eps} + (\alpha-1) e^{-\alpha \eps}} 
&= \frac{2 e^{\alpha  \eps }\alpha(\alpha - 1)\eps \left(\frac{\sinh(\alpha \eps)}{\alpha \eps} - \frac{\sinh((\alpha - 1)\eps)}{(\alpha - 1)\eps}\right)}{(\alpha -1) e^{\eps }+\alpha  e^{2 \alpha  \eps}} &\geq 0, 
\end{align*}
where the last inequality is because all terms are negative; in particular, the last term $\frac{\sinh(\alpha \eps)}{\alpha \eps} - \frac{\sinh((\alpha - 1)\eps)}{(\alpha - 1)\eps}$ is non-negative due to \Cref{lem:sinhx-divx-inc}. Finally, the bound is tight as $\lim_{\alpha \to 1} \heps(\alpha) = \eps + e^{-\eps} - 1$.
\end{proof}
\subsection{Discrete Laplace}

When the range of $g$ is integers, the Laplace mechanism can be improved by using the discrete Laplace (aka Geometric) mechanism~\cite{GhoshRS12}. In particular, for $g: \cX^n \to \Z$, the $\eps$-DP discrete Laplace mechanism outputs $g(x) + Z$ where $Z \sim \DLap(\eps/\Delta(g))$. The privacy profile of this mechanism depends on the value of the sensitivity, $\Delta(g)$. Thus, our bound depends on $\Delta := \Delta(g)$ as well. We state its tight bound for zCDP below.

\begin{theorem}[zCDP of discrete Laplace]\label{thm:zcdp}
The $\eps$-DP discrete Laplace mechanism for sensitivity $\Delta$ is tightly-$\left(\eps\left(1 - \frac{(1 - e^{-\eps}) \csch(\eps / \Delta)}{\Delta}\right)\right)$-zCDP.
\end{theorem}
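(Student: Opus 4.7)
The plan is to mirror the strategy used in \Cref{thm:zcdp-lap}, proceeding in three steps.

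\textbf{Step 1 (RDP formula).} Derive a closed form for the tight RDP bound $\heps(\alpha)$ of the discrete Laplace mechanism. Writing $a := \eps/\Delta$ and taking $P, Q$ to be the PMFs of $\DLap(a) + \Delta$ and $\DLap(a)$ respectively, we have
\[
\int P^\alpha Q^{1-\alpha} = \tanh(a/2) \cdot \sum_{x \in \Z} e^{-a\left(\alpha |x - \Delta| + (1-\alpha)|x|\right)}.
\]
I would split the sum into the regions $x > \Delta$, $0 \le x \le \Delta$, and $x < 0$, evaluate each via a geometric series, and collect terms to obtain an explicit expression for $\heps(\alpha)$ in terms of $e^{-\eps}$, $e^{\eps(\alpha-1)/\Delta}$, and $\sinh$.

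\textbf{Step 2 (zCDP upper bound).} Defining
\[
F(\eps) = \eps\left(1 - \frac{(1-e^{-\eps})\csch(\eps/\Delta)}{\Delta}\right) - \frac{\heps(\alpha)}{\alpha},
\]
the goal is to show $F(\eps) \ge 0$ for every $\eps > 0$ and $\alpha > 1$. I would check $F(0) = 0$ by direct substitution, then attempt $F'(\eps) \ge 0$. Following the Laplace template, the hope is that $F'$ can be manipulated into a form exposing a difference of the type $\frac{\sinh(\alpha \eps)}{\alpha \eps} - \frac{\sinh((\alpha-1)\eps)}{(\alpha-1)\eps}$ (or a shifted/discrete analog thereof), whose non-negativity follows from \Cref{lem:sinhx-divx-inc}.

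\textbf{Step 3 (Tightness).} Take the $\alpha \to 1^+$ limit of $\heps(\alpha)$ and match it to $\rho$. This limit equals the KL divergence $a \cdot \mathbb{E}_{Y \sim \DLap(a)}\bigl[|Y + \Delta| - |Y|\bigr]$. Partitioning $Y \in \{Y \ge 0\} \cup \{-\Delta + 1 \le Y \le -1\} \cup \{Y \le -\Delta\}$, the integrand equals $\Delta$, $2Y + \Delta$, and $-\Delta$, respectively; summing the three geometric contributions should yield $a \cdot \left(\Delta - \frac{1 - e^{-a\Delta}}{\sinh a}\right)$, which reduces to $\eps\bigl(1 - \frac{(1-e^{-\eps})\csch(\eps/\Delta)}{\Delta}\bigr)$ after substituting $a\Delta = \eps$.

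\textbf{Main obstacle.} Step 2 is expected to be substantially more delicate than in the continuous Laplace case because $\heps(\alpha)$ now contains three summands (one per region in Step 1) rather than two. Reducing $F'(\eps) \ge 0$ to a clean $\sinh(x)/x$ monotonicity statement will likely require careful factoring, e.g., pulling out a common factor like $e^{-a \alpha \Delta}/\sinh(a)$ before applying \Cref{lem:sinhx-divx-inc}. Should the direct manipulation resist simplification, a fallback is \Cref{lem:concavity}: if one can establish that $\heps(\alpha)$ is concave in $\alpha$ on $[1, \infty)$ and that $\heps'(1) \le \heps(1)$, maximization of $\heps(\alpha)/\alpha$ at $\alpha = 1$ is immediate.
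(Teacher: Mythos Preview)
Your Steps 1 and 3 match the paper (Proposition~\ref{prop:dlap-rdp} and Lemma~\ref{lem:dlap-rdp} for the RDP formula, and the $\alpha \to 1$ limit for tightness). The substantive divergence is in Step 2, and there the plan has a genuine gap.

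You propose to fix $\Delta$ and differentiate in $\eps$, starting from $F(0)=0$. The paper instead fixes the scale $a=\eps/\Delta$ and differentiates in $\Delta$, starting from $\Delta=1$. This reparametrization is the idea you are missing: at $\Delta=1$ the discrete Laplace mechanism has the same privacy loss distribution as binary Randomized Response, so the base case $f(1)\ge 0$ is precisely \Cref{cor:dlap-rdp-one}, the already-known worst-case $\eps$-DP $\to$ zCDP conversion. From there the paper shows $f'(\Delta)\ge 0$ for all $\Delta\ge 1$, which itself requires a second bootstrap---establishing $f'(1)\ge 0$ and then $f''(\Delta)\ge 0$, the latter via a log-ratio comparison $p(\Delta)$ with its own base case $p(1)\ge 0$ and monotonicity $p'(\Delta)\ge 0$.

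Crucially, the paper needs more than monotonicity of $\sinh(x)/x$: both the $f'(1)\ge 0$ step and the $p(1)\ge 0$ step rely on the convexity and super-additivity of $\phi(x)=\log(\sinh(x)/x)$ from \Cref{lem:log-sinh-over-x-convex-and-superadditive}. Your hope that $F'(\eps)\ge 0$ will collapse to a single $\sinh(x)/x$ comparison, as in the continuous Laplace proof, is optimistic: the discrete RDP expression carries three geometric pieces and a $\tanh(a/2)$ prefactor whose $\eps$-dependence (through $a=\eps/\Delta$) tangles all terms together, and the paper's two-layer derivative argument in the $\Delta$ variable already indicates that no single monotonicity fact suffices. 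Your fallback of concavity in $\alpha$ via \Cref{lem:concavity} is not the route the paper takes for discrete Laplace and would itself need a proof.

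In short, Step 2 as written is a hope rather than a plan; the paper closes it by switching the differentiation variable to $\Delta$, anchoring at the known $\Delta=1$ case, and invoking the finer properties of $\phi$ in \Cref{lem:log-sinh-over-x-convex-and-superadditive}.
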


Before we prove the above, let us first remark that, when $\Delta = 1$, \Cref{thm:zcdp} yields the bound of $\eps \cdot \tanh(\eps/2)$, matching the worst case bound for $\eps$-DP algorithm in \Cref{prop:dp-to-zcdp}. This is as expected, since the privacy loss distribution\footnote{See e.g. \cite{PLD-supp} for the definition of privacy loss distributions (PLDs) and derivations of PLDs for discrete Laplace and Randomized Response.} of the $\eps$-DP discrete Laplace mechanism for $\Delta = 1$ coincides with the $\eps$-DP binary Randomized Response, which is known to be the worst case $\eps$-DP mechanism in terms of its zCDP guarantee.

On the other hand, if we take $\Delta \to \infty$, then we arrive at $\lim_{\Delta \to \infty} \left(\eps\left(1 - \frac{(1 - e^{-\eps}) \csch(\eps / \Delta)}{\Delta}\right)\right) = \eps + e^{-\eps} - 1$, which coincides with the zCDP bound of the Laplace mechanism (\Cref{thm:zcdp-lap}). Again, this is expected since the behavior of discrete Laplace mechanism more closely resembles the Laplace mechanism as $\Delta \to \infty$.

\subsubsection{RDP Bound Derivation}

We will next prove \Cref{lem:dlap-rdp}. We start by directly computing the R{\'e}nyi divergence between the discrete Laplace distribution and its shift, as stated below.

\begin{proposition}
\label{prop:dlap-rdp}
Let $Z \sim \DLap(a)$ and $d \in \Z_{\geq 0}$, then
\begin{align*}
&\dr{Z + d}{Z} = \dr{Z}{Z + d} \\ &\qquad= \frac{1}{\alpha-1} \log\left(
\tanh(a/2) \left(
    \frac{e^{-a \alpha d}}{e^{a}-1}
    +
    \frac{e^{a-a \alpha d }-e^{a (\alpha  (d +2)-d )}}{e^a-e^{2 a \alpha }}
    +
    \frac{e^{-a (1-\alpha) d}}{e^{a}-1}
\right)
\right)
\end{align*}
\end{proposition}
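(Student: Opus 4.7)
\begin{proof_}[Proof proposal]
The plan is to compute the Rényi divergence directly from the definition and the closed-form PMF $f(y) = \tanh(a/2)\, e^{-a|y|}$ of $\DLap(a)$. Writing $P$ for the law of $Z+d$ and $Q$ for the law of $Z$, we have $P(x) = f(x-d)$ and $Q(x) = f(x)$, so
\[
P(x)^\alpha Q(x)^{1-\alpha} = \tanh(a/2)\cdot \exp\bigl(-a\bigl[\alpha|x-d| + (1-\alpha)|x|\bigr]\bigr),
\]
and the theorem reduces to evaluating $\sum_{x\in\Z} \exp(-a[\alpha|x-d| + (1-\alpha)|x|])$ in closed form.

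Before the main computation I would dispose of the symmetry claim $\dr{Z+d}{Z} = \dr{Z}{Z+d}$. Because $f$ is even, the substitution $y = d-x$ sends $f(x)^\alpha f(x-d)^{1-\alpha}$ to $f(y-d)^\alpha f(y)^{1-\alpha}$ and matches the two sums term by term, yielding the identity. This is entirely mechanical and will occupy one short paragraph in the write-up.

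For the main evaluation, since $d \ge 0$, the functions $|x|$ and $|x-d|$ are each affine on the three intervals $x \le -1$, $0 \le x \le d$, and $x \ge d+1$, so on each interval the summand is geometric. I would carry out the three geometric sums separately: on $x \le -1$ the exponent simplifies to $-a\alpha d + ax$ and yields $\tfrac{e^{-a\alpha d}}{e^a - 1}$; on $x \ge d+1$ the exponent simplifies to $a\alpha d - ax$ and yields $\tfrac{e^{-a(1-\alpha)d}}{e^a - 1}$; these match the first and third terms in the stated formula. The middle interval $0 \le x \le d$ gives exponent $-a\alpha d + a(2\alpha-1)x$, whose sum is $e^{-a\alpha d}\cdot \tfrac{e^{a(2\alpha-1)(d+1)} - 1}{e^{a(2\alpha-1)} - 1}$. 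Multiplying numerator and denominator by $e^a$ and rearranging signs matches the middle term $\tfrac{e^{a - a\alpha d} - e^{a(\alpha(d+2)-d)}}{e^a - e^{2a\alpha}}$.

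The only real obstacle is the algebraic manipulation of the middle sum to match the exact form stated: one must track carefully which endpoint convention produces the $e^a$ in the numerator (forcing the left and right geometric tails to start at $x = -1$ and $x = d+1$ rather than at $0$ and $d$), and then clear the awkward factor $e^{a(2\alpha-1)} - 1$ in favor of $e^a - e^{2a\alpha}$. After multiplying by the overall $\tanh(a/2)$ and applying $\tfrac{1}{\alpha-1}\log(\cdot)$, the three terms combine to the claimed expression. A final remark handles the boundary $d = 0$ (where the middle region collapses to the single point $x=0$ and both the stated formula and my derivation reduce to the trivial value $0$), and for $2\alpha - 1 = 0$ (which does not occur for $\alpha > 1$ but is the only singular case of the geometric sum).
\end{proof_}
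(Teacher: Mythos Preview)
Your proposal is correct and follows essentially the same approach as the paper: split the sum $\sum_{x\in\Z} e^{-a(\alpha|x-d|+(1-\alpha)|x|)}$ into the three ranges $x\le -1$, $0\le x\le d$, $x\ge d+1$, evaluate each as a geometric series, and invoke the evenness of the $\DLap$ PMF for the symmetry claim. The paper's write-up is terser (it does not spell out the substitution $y=d-x$ or the boundary cases $d=0$ and $2\alpha-1=0$), but the underlying argument is identical.
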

\begin{proof}
We can compute $\dr{Z + d}{Z}$ directly as follows.
\begin{align*}
&e^{(\alpha - 1)\dr{Z + d}{Z}} = 
    \sum_{x=-\infty}^\infty f_{\DLap(a)}(x-d)^\alpha \cdot f_{\DLap(a)}(x)^{1-\alpha}\\
&= \sum_{x=-\infty}^\infty (\tanh(a/2)e^{-a|x-d|})^\alpha \cdot (\tanh(a/2)e^{-a|x|})^{1-\alpha}\\
&= \tanh(a/2)\sum_{x=-\infty}^\infty
    e^{-a (\alpha |x-d| + (1-\alpha)|x|)}\\
&= \tanh(a/2) \left(
    \left(\sum_{x=-\infty}^{-1}
    e^{a(x-\alpha d)}
    \right) +
        \left(\sum_{x=0}^{d}
        e^{a ((2 \alpha -1) x-\alpha d )}
    \right)+
        \left(\sum_{x=d+1}^{\infty}
        e^{a(\alpha d- x)}
    \right)
\right)\\
&= \tanh(a/2) \left(
    \frac{e^{-a \alpha d}}{e^{a}-1}
    +
    \frac{e^{a-a \alpha  d }-e^{a (\alpha  (d +2)-d )}}{e^a-e^{2 a \alpha }}
    +
    \frac{e^{-a (1-\alpha) d}}{e^{a}-1}
\right).
\end{align*}
Finally, $\dr{Z + d}{Z} = \dr{Z}{Z + d}$ follows from the symmetry of $\DLap$ around zero.
\end{proof}

We can now prove the RDP bound 
by showing that the worst case divergence occurs when $d = \Delta$.

\begin{lemma}[RDP of discrete Laplace]\label{lem:dlap-rdp}
The $\eps$-DP discrete Laplace mechanism for sensitivity $\Delta$ is tightly-$(\alpha, \heps_{D}(\alpha; \Delta))$-RDP for all $\alpha > 1$ where
\begin{align*}
\heps_{D}(\alpha; \Delta) = \frac{1}{\alpha-1} \log\left(
\tanh(a/2) \left(
    \frac{e^{-a \alpha \Delta}}{e^{a}-1}
    +
    \frac{e^{a-a \alpha  \Delta }-e^{a (\alpha  (\Delta +2)-\Delta )}}{e^a-e^{2 a \alpha }}
    +
    \frac{e^{-a (1-\alpha) \Delta}}{e^{a}-1}
\right)
\right)
\end{align*}
with $a = \eps / \Delta$.
\end{lemma}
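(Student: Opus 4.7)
My plan is to use Proposition~\ref{prop:dlap-rdp} as the main input and reduce the lemma to a monotonicity statement: among the possible shifts $d \in \{0, 1, \ldots, \Delta\}$ arising from neighboring datasets, the Rényi divergence $\dr{Z+d}{Z}$ (with $Z \sim \DLap(a)$) is maximized at $d = \Delta$. Translation invariance of the additive-noise mechanism gives $\dr{M(x)}{M(x')} = \dr{Z+d}{Z}$ where $d = g(x) - g(x')$, and the identity $\dr{Z+d}{Z} = \dr{Z}{Z+d}$ together with the symmetry of $\DLap(a)$ reduces the problem to $d \geq 0$. Hence the RDP bound and its tightness both follow from monotonicity in $d$, evaluated at $d = \Delta$.

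The first step I would take is a compact rewriting of the three-term expression in Proposition~\ref{prop:dlap-rdp}. Using the identity $e^{ak}-1 = 2e^{ak/2}\sinh(ak/2)$ to convert every exponential difference into a hyperbolic sine, I expect the three fractions to collapse into the clean form
\[
T(d) \;:=\; e^{(\alpha-1)\dr{Z+d}{Z}} \;=\; \frac{\sinh(a(\alpha-1))\,e^{-a\alpha d} + \sinh(a\alpha)\,e^{a(\alpha-1)d}}{2\cosh(a/2)\,\sinh(a(2\alpha-1)/2)}.
\]
A convenient sanity check is $T(0) = 1$ (as required by $\dr{Z}{Z} = 0$), which follows from $\sinh A + \sinh B = 2\sinh((A+B)/2)\cosh((A-B)/2)$ applied with $A = a(\alpha-1)$, $B = a\alpha$ so that the denominator cancels exactly.

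With this clean form in hand, I would treat $d$ as a continuous variable on $[0, \Delta]$ and show $T'(d) \geq 0$. A short differentiation reduces $T'(d) \geq 0$ to
\[
(\alpha-1)\sinh(a\alpha)\,e^{a(2\alpha-1)d} \;\geq\; \alpha\sinh(a(\alpha-1)).
\]
Since $2\alpha - 1 > 0$ for $\alpha > 1$, the left side is increasing in $d$, so it suffices to verify the $d = 0$ case, namely $(\alpha-1)\sinh(a\alpha) \geq \alpha\sinh(a(\alpha-1))$. This is exactly $\frac{\sinh(a\alpha)}{a\alpha} \geq \frac{\sinh(a(\alpha-1))}{a(\alpha-1)}$, which is Lemma~\ref{lem:sinhx-divx-inc} applied to the pair $0 < a(\alpha-1) < a\alpha$.

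Combining these steps, $\dr{Z+d}{Z} \leq \dr{Z+\Delta}{Z} = \heps_{D}(\alpha;\Delta)$ for every integer $0 \leq d \leq \Delta$, yielding the RDP upper bound; tightness is realized by any neighboring pair with $|g(x)-g(x')| = \Delta$, which exists by definition of sensitivity. The main obstacle I anticipate is the algebraic collapse in the first step: the raw three-term expression in Proposition~\ref{prop:dlap-rdp} disguises the clean $C_1 e^{-a\alpha d} + C_2 e^{a(\alpha-1)d}$ structure, and the hyperbolic bookkeeping has to be carried out carefully to extract it. Once that reduction is in place, monotonicity is essentially a one-line invocation of Lemma~\ref{lem:sinhx-divx-inc}.
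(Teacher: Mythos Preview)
Your proposal is correct and follows the same overall strategy as the paper: reduce to monotonicity of $d \mapsto e^{(\alpha-1)\dr{Z+d}{Z}}$ on $[0,\Delta]$, differentiate, and conclude the worst case is $d=\Delta$. The difference is purely in the algebra. You first collapse the three-term expression of Proposition~\ref{prop:dlap-rdp} into the clean two-exponential form
\[
T(d)=\frac{\sinh(a(\alpha-1))\,e^{-a\alpha d}+\sinh(a\alpha)\,e^{a(\alpha-1)d}}{2\cosh(a/2)\,\sinh(a(2\alpha-1)/2)},
\]
and then the sign of $T'(d)$ reduces to $\frac{\sinh(a\alpha)}{a\alpha}\geq\frac{\sinh(a(\alpha-1))}{a(\alpha-1)}$, i.e.\ Lemma~\ref{lem:sinhx-divx-inc}. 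The paper instead differentiates the raw three-term sum $f(d)$ directly, flips the sign of the middle denominator (writing $e^a-e^{2a\alpha}=-(e^{2a\alpha}-e^a)$), and observes that the resulting two grouped terms are non-negative by inspection; no hyperbolic rewriting and no appeal to Lemma~\ref{lem:sinhx-divx-inc} are needed. Your route yields a tidier closed form (and a pleasant sanity check $T(0)=1$) that ties into the $\sinh(x)/x$ machinery used elsewhere in the paper, at the cost of a nontrivial algebraic simplification; the paper's route is shorter but less structured.
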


\begin{proof} 
Let $f(d) = \frac{e^{-a \alpha d}}{e^{a}-1}+\frac{e^{a-a \alpha  d }-e^{a (\alpha  (d +2)-d )}}{e^a-e^{2 a \alpha }}+ \frac{e^{-a (1-\alpha) d}}{e^{a}-1}$. Notice that we have
\begin{align*}
f'(d) &= \frac{-a\alpha e^{-a \alpha d}}{e^{a}-1}+\frac{-a \alpha e^{a-a \alpha  d }- a(\alpha - 1) e^{a (\alpha  (d +2)-d )}}{e^a-e^{2 a \alpha }}+ \frac{a\alpha e^{-a (1-\alpha) d}}{e^{a}-1} \\
&= \frac{a\alpha(e^{a(\alpha - 1)d} - e^{-a\alpha d})}{e^a - 1} + \frac{a \alpha e^{a-a \alpha  d } + a(\alpha - 1) e^{a (\alpha  (d +2)-d )}}{e^{2 a \alpha } - e^a},
\end{align*}
which is term-by-term non-negative. Thus, $f$ is an increasing function in $d$.

Let $Z \sim \DLap(a)$.
For any function $g$ and neighboring inputs $x, x'$, we have
\begin{align*}
\dr{g(x) + Z}{g(x') + Z} = \dr{|g(x) - g(x')| + Z}{Z} \leq \dr{\Delta + Z}{Z} = \heps_{D}(\alpha; \Delta),
\end{align*}
where the inequality holds because $f$ is increasing. Thus, the mechanism is $(\alpha, \heps_D(\alpha; \Delta))$-RDP.

The tightness follows from considering any pair of inputs $x, x'$ such that $|g(x) - g(x')| = \Delta$.
\end{proof}

\subsubsection{From RDP to zCDP}

\begin{proof}[Proof of \Cref{thm:zcdp}]
Consider the following as a function of $\Delta$,
\begin{align*}
f(\Delta) =  a (\Delta + \left(e^{-a \Delta }-1\right) \csch(a))
- \heps_{D}(\alpha; \Delta) /\alpha.
\end{align*}

The upper bound follows from showing that $f(\Delta) \ge 0$ for all $\Delta \ge 1, a > 0, \alpha > 1$. Recall $\heps$ from \Cref{prop:dp-to-rdp}.
Observe that $f(1) = \eps \cdot \tanh(\eps/2) - \frac{\hat{\eps}(\alpha)}{\alpha}$, which is non-negative due to \Cref{cor:dlap-rdp-one}. Thus, to show that $f(\Delta) \geq 0$ for all $\Delta \geq 1$, it suffices to show that $f'(\Delta)$ is non-negative for all $\Delta \geq 1$.

It will be helpful for us to write out the first and second derivative of $f$ with respect to $\Delta$:
\begin{align}
    f'(\Delta) &=
    \frac{a \left(\frac{(2 \alpha -1) \left(e^{2 a \alpha }-1\right) e^{2 \alpha  a \Delta +a}}{\alpha  \left(-e^{a (2 \alpha +\Delta )}+e^{2 \alpha  a \Delta +a}-e^{2 \alpha  a (\Delta +1)+a}+e^{a (\Delta +2)}\right)}-a (\alpha -1) e^{-a \Delta } \csch(a)+\alpha \right)}{\alpha -1},\nonumber \\
    f''(\Delta) &= a^2 \left(-\frac{(1-2 \alpha )^2 \left(e^{2 a \alpha } - e^{2 a}\right) \left(e^{2 a \alpha }-1\right) e^{a \Delta +a}}{(\alpha -1) \alpha  \left(e^{a (-\alpha  \Delta +\Delta +2)}-e^{a (\alpha  (-\Delta )+2 \alpha +\Delta )}+e^{\alpha  a \Delta +a}-e^{\alpha  a (\Delta +2)+a}\right)^2}+a e^{-a \Delta } \csch(a)\right). \label{eq:second-diff-f}
\end{align}

To show that $f'(\Delta)$ is always non-negative for $\Delta \ge 1$, we will first show that $f'(1) \ge 0$, and then show $f''(\Delta) \ge 0$ for $\Delta \ge 1$.

\paragraph{Proof of Non-negativity of $f'(1)$.}
Note that
$$
f'(1) = a \left(-\frac{1}{\alpha} \cdot \frac{(\alpha -1) e^{4 a \alpha }+e^{2 a} \alpha +(1-2 \alpha ) e^{2 a \alpha }}{(\alpha -1)  \left(e^{4 a \alpha } - e^{2 a}\right)}+a -a \coth (a) +1\right).
$$
It is simple to check that $\lim_{\alpha \to 1^+}f'(1) = 0$. Thus, it suffices for us to show that a $f'(1)$ is increasing in $\alpha$. Since $\frac{1}{\alpha}$ is decreasing in $\alpha$ and $a -a \coth (a) +1$ is a constant (w.r.t. to $\alpha$), it in turn suffices for us to prove that  $\frac{(\alpha -1) e^{4 a \alpha }+e^{2 a} \alpha +(1-2 \alpha ) e^{2 a \alpha }}{(\alpha -1)  \left(e^{4 a \alpha } - e^{2 a}\right)}$ is decreasing in $\alpha$. We can rearrange this terms as follows:
\begin{align*}
\frac{(\alpha -1) e^{4 a \alpha }+e^{2 a} \alpha +(1-2 \alpha ) e^{2 a \alpha }}{(\alpha -1)  \left(e^{4 a \alpha } - e^{2 a}\right)} &= 1-
\frac{(2 \alpha -1) \left(e^{2 a \alpha } - e^{2 a}\right)}{(\alpha -1)  \left(e^{4 a \alpha } - e^{2 a}\right)} \\ &= 1 - e^{-a\alpha} \cdot \frac{(2 \alpha - 1)\sinh (a \cdot \alpha -a )}{(\alpha -1)\sinh (2 a \cdot \alpha - a )} \\
&=  1 - e^{-a \alpha + \phi(a(\alpha - 1)) - \phi(a(2\alpha - 1))},
\end{align*}
where $\phi$ is defined in \Cref{lem:log-sinh-over-x-convex-and-superadditive} as $\phi(x) = \log\left(\frac{\sinh(x)}{x}\right)$.

Thus, it suffices to show that $-a \alpha + \phi(a(\alpha - 1)) - \phi(a(2\alpha - 1))$ is decreasing in $\alpha$. To see this, notice that its derivative (in $\alpha$) is
\begin{align*}
-a + a \phi'(a(\alpha - 1)) - 2a\phi'(a(2\alpha - 1)) \leq -a - a\phi'(a(2\alpha - 1)) \leq 0,
\end{align*}
where the first inequality follows from the convexity of $\phi$ (\Cref{lem:log-sinh-over-x-convex-and-superadditive}) and the second is from $\phi$ is increasing (\Cref{lem:sinhx-divx-inc}). Hence, we have concluded our proof that $f'(1) \geq 0$ (for all $\alpha > 1, a > 0$).

\paragraph{Proof of Non-negativity of $f''(\Delta)$.}
To show non-negativity of $f''(\Delta)$, we will compare the log ratio of the two terms within the parenthesis in \eqref{eq:second-diff-f}. Denote this quantity by

$$
p(\Delta) = \log \frac
{a e^{-a \Delta } \csch(a)}
{-\frac{(1-2 \alpha )^2 \left(e^{2 a \alpha } - e^{2 a}\right) \left(e^{2 a \alpha }-1\right) e^{a \Delta +a}}{(\alpha -1) \alpha  \left(e^{a (-\alpha  \Delta +\Delta +2)}-e^{a (\alpha  (-\Delta )+2 \alpha +\Delta )}+e^{\alpha  a \Delta +a}-e^{\alpha  a (\Delta +2)+a}\right)^2}}.
$$

To prove $f''(\Delta) \ge 0$, it suffices to prove $p(\Delta) \geq 0$ for all $\Delta \geq 1$. For the latter, it in turn suffices to prove that $p(1) \ge 0$ and $p''(\Delta) \ge 0$ for all $\Delta \geq 1$.
To do this, let us first write out the relevant expressions:
\begin{align*}
p(1) &= \log \left(\frac{a (\alpha -1) \alpha  \csch(a) \sinh ^2(a-2 a \alpha ) \csch(a \alpha ) \csch(a \alpha  - a)}{(1-2 \alpha )^2}\right)\\
p'(\Delta) &= \frac{2 a (2 \alpha -1) \left(e^{2 a \alpha }-1\right) e^{2 \alpha  a \Delta +a}}{e^{a (2 \alpha +\Delta )}-e^{2 \alpha  a \Delta +a}+e^{2 \alpha  a (\Delta +1)+a}-e^{a (\Delta +2)}}-2 a \alpha\\
\end{align*}

We can rewrite $p(1)$ as follows:
\begin{align*}
p(1) 
&= \log \left(\frac{a (\alpha -1) \alpha  \csch(a) \sinh ^2(a-2 a \alpha ) \csch(a \alpha ) \csch(a \alpha  - a)}{(1-2 \alpha )^2}\right) \\
&= 2\phi((2\alpha - 1)a) - \phi((\alpha - 1)a) - \phi(a\alpha) - \phi(a) \\
&= \left(\phi((2\alpha - 1)a)  - \phi((\alpha - 1)a) - \phi(a\alpha)\right) + \left(\phi((2\alpha - 1)a) - \phi(a)\right) &\geq 0,
\end{align*}
where the inequality is from the fact that $\phi$ is super-additive (\Cref{lem:log-sinh-over-x-convex-and-superadditive}) and increasing (\Cref{lem:sinhx-divx-inc}).

As for $p'(\Delta)$, we have
\begin{align*}
p'(\Delta) &= \frac{2 a (2 \alpha -1) \left(e^{2 a \alpha }-1\right) e^{2 \alpha  a \Delta +a}}{e^{a\Delta}(e^{2a\alpha} - e^{2a}) + e^{a(2\alpha\Delta+1)}(e^{2a\alpha} - 1)}-2 a \alpha
\end{align*}
Dividing this expression by $2a$ and rearranging, we have
\begin{align*}
p'(\Delta) \geq 0 &\iff (\alpha - 1)e^{a(2\alpha\Delta+1)}(e^{2a\alpha}-1) \geq \alpha e^{a\Delta}(e^{2a\alpha}-e^{2a}) \\
&\iff e^{a(2\alpha - 1) \Delta} \cdot \frac{\sinh(a\alpha)}{a\alpha} \geq \frac{\sinh((\alpha-1)a)}{(\alpha-1)a},
\end{align*}
which is true because $a(2\alpha - 1) \Delta > 0$ and $\frac{\sinh x}{x}$ is increasing (\Cref{lem:sinhx-divx-inc}).

Altogether, the above argument shows that $f(\Delta) \ge 0$ for all $\Delta \geq 1$. In other words, the $\eps$-DP discrete Laplace mechanism is $a (\Delta + \left(e^{-a \Delta }-1\right) \csch(a))$-zCDP

Tightness of the bound follows directly from the tightness in \Cref{lem:dlap-rdp} by taking $\alpha \to 1^+$.
\end{proof}

\section{Local DP Mechanisms}

We next move on towards local DP mechanisms~\cite{KasiviswanathanLNRS11} where the randomizer is now run on a single data point (i.e. $n = 1$ in \Cref{def:dp,def:rdp}). We consider two popular local DP mechanisms: $k$-Randomized Response and RAPPOR.

\section{$k$-Randomized Response}

The $k$-Randomized Response ($k$-RR) mechanism~\cite{KairouzBR16} takes as input a number in $[k] := \{1, \dots, k\}$ outputs its input with probability $p = \frac{e^\eps}{e^\eps + k -1}$, and a uniformly random other output symbol with probability $1-p$. (Note that under the notions of \Cref{def:dp,def:rdp}, both the domain $\cX$ and the range $\cY$ are $[k]$.)

We prove that the tight zCDP bound is obtained at $\alpha \to 1$ for $2 \leq k \leq 6$:

\begin{theorem}\label{thm:zcdp-rr-tight}
The $\eps$-DP $k$-RR mechanism is tightly-$\left(\eps \cdot \frac{e^\eps - 1}{e^\eps + k - 1}\right)$-zCDP for $2 \le k \le 6$.
\end{theorem}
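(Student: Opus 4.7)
My plan mirrors the strategy used for the Laplace and discrete Laplace theorems: compute the tight RDP curve by direct calculation, then argue that the zCDP bound $\hat\eps(\alpha)/\alpha$ is maximized in the limit $\alpha \to 1^+$. For distinct inputs $x, x' \in [k]$, the two output distributions of $k$-RR agree on $k-2$ coordinates (each of mass $q := 1/(e^\eps + k-1)$) and differ on two coordinates (with probabilities $p := e^\eps q$ and $q$ swapped). Summing the Rényi kernel over the alphabet gives the tight RDP bound
\[
\hat\eps_{kRR}(\alpha) \;=\; \frac{1}{\alpha - 1}\log\frac{e^{\alpha \eps} + e^{(1-\alpha)\eps} + k - 2}{e^\eps + k - 1},
\]
and L'Hopital at $\alpha \to 1^+$ produces the KL divergence $\rho := \eps(e^\eps - 1)/(e^\eps + k - 1)$. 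This identifies the target value and immediately establishes tightness, so the remaining task is to prove $\hat\eps_{kRR}(\alpha) \le \rho \cdot \alpha$ for all $\alpha > 1$ and $k \in \{2,\ldots,6\}$.

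To prove this, I would invoke \Cref{lem:concavity} with $f(\alpha) := \hat\eps_{kRR}(\alpha)$, which reduces the problem to verifying (a) $\hat\eps_{kRR}$ is concave on $[1,\infty)$ and (b) $\hat\eps_{kRR}'(1) \le \hat\eps_{kRR}(1) = \rho$. Set $u(\alpha) := (\alpha-1)\hat\eps_{kRR}(\alpha) = \log N(\alpha) - \log(e^\eps + k-1)$ where $N(\alpha) := e^{\alpha\eps} + e^{(1-\alpha)\eps} + k - 2$.

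For (a), a quotient-rule calculation shows that $\hat\eps_{kRR}''(\alpha)$ has the same sign as $r(\alpha) := u''(\alpha)(\alpha-1)^2 - 2u'(\alpha)(\alpha-1) + 2u(\alpha)$, which satisfies $r(1) = 0$ and $r'(\alpha) = u'''(\alpha)(\alpha-1)^2$. So concavity reduces to $u'''(\alpha) \le 0$ on $[1,\infty)$. Using the standard log-derivative identity $(\log N)''' = N'''/N - 3 N' N''/N^2 + 2(N'/N)^3$ and writing $s(\alpha) := e^{\alpha\eps} + e^{(1-\alpha)\eps}$, after cancellation the sign of $u'''(\alpha)$ matches the sign of $-\bigl[(k-2)(s(\alpha) - (k-2)) + 8 e^\eps\bigr]$. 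Since $s(\alpha) \ge s(1) = e^\eps + 1$ on $[1,\infty)$ and the bracket is linear in $s$ with non-negative slope, it suffices to check it at $\alpha = 1$, reducing to $(k+6)e^\eps + (k-2)(3-k) \ge 0$; a short case-by-case check confirms this for each $k \in \{2,\ldots,6\}$. For (b), two applications of L'Hopital yield $\hat\eps_{kRR}'(1) = u''(1)/2$, and a direct calculation gives $u''(1) = \eps^2((k+2)e^\eps + k-2)/(e^\eps + k-1)^2$. The condition becomes $P(\eps) := 2(e^\eps - 1)(e^\eps + k-1) - \eps((k+2)e^\eps + k-2) \ge 0$. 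Since $P(0) = P'(0) = 0$ and $P''(\eps) = e^\eps[8(e^\eps - 1) - \eps(k+2)]$ is non-negative iff $(e^\eps - 1)/\eps \ge (k+2)/8$, which holds because $(e^\eps - 1)/\eps \ge 1 \ge (k+2)/8$ for $k \le 6$, convexity gives $P \ge 0$.

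The main obstacle I anticipate is the algebraic simplification inside (a): expanding $u'''(\alpha)$ via the log-derivative formula produces several cancellations, and massaging it to the clean form $(k-2)(s - (k-2)) + 8 e^\eps$ requires careful bookkeeping. It is also this step that pinpoints the threshold $k \le 6$---both the concavity condition and $P''(\eps) \ge 0$ fail starting at $k = 7$ (for instance, the concavity check at $\alpha = 1$ becomes $13 e^\eps - 20 \ge 0$, which fails for small $\eps$), which is consistent with the paper's remark that a different, non-tight approach is needed for larger $k$ (cf.\ \Cref{lem:non-opt-rr}).
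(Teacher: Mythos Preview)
Your proposal is correct and follows essentially the same route as the paper: compute the tight RDP curve, invoke \Cref{lem:concavity}, reduce concavity to $u'''(\alpha)\le 0$ (the paper's $H'''$), and verify the derivative condition $\hat\eps_{kRR}'(1)\le\hat\eps_{kRR}(1)$. The only differences are in the final bookkeeping---the paper bounds the bracket in $H'''$ by sending $\eps\to 0$ to obtain $k(6-k)\ge 0$, and handles the derivative condition via a Taylor-series comparison, whereas your substitution $s=e^\eps+1$ and your $P''(\eps)=e^\eps[8(e^\eps-1)-(k+2)\eps]\ge 0$ argument are equally valid and arguably a bit cleaner.
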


Interestingly, the condition on $k$ is necessary, as we show below that for any $\eps > 0$ and any sufficiently large $k > k^*(\eps)$, the tight bound is \emph{not} obtained at $\alpha \to 1$. Furthermore, the bound $k^*(\eps)$ below satisfies $\lim_{\eps \to 1} k^*(\eps) = 6$; this means that 6 in \Cref{thm:zcdp-rr-tight} is the best possible value one can hope for.

\begin{lemma} \label{lem:non-opt-rr}
For all $\eps > 0$ and $k > k^*(\eps) := \frac{2 \left(e^{\eps }-1\right) \left(e^{\eps }-1-\eps\right)}{\eps(e^{\eps } + 1) -2 e^{\eps }+2}$, the $\eps$-DP $k$-RR mechanism is \emph{not} $\left(\eps \cdot \frac{e^\eps - 1}{e^\eps + k - 1}\right)$-zCDP.
\end{lemma}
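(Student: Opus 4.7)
The plan is to show that the right derivative at $\alpha = 1$ of the function $g(\alpha) := \heps_{RR}(\alpha)/\alpha$ is strictly positive whenever $k > k^*(\eps)$, where $\heps_{RR}$ denotes the tight RDP curve of $k$-RR. Since a mechanism is $\rho$-zCDP exactly when $g(\alpha) \le \rho$ for all $\alpha > 1$, and since the calculation in \Cref{thm:zcdp-rr-tight} already gives $\lim_{\alpha \to 1^+} g(\alpha) = \eps(e^\eps-1)/(e^\eps+k-1)$, strict positivity of $g'(1^+)$ immediately produces an $\alpha > 1$ at which $g(\alpha)$ exceeds this limit, ruling out the claimed zCDP bound.

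For the explicit form, I pick any two neighboring inputs $i \neq j$ and observe that
\[
\heps_{RR}(\alpha) \;=\; \tfrac{1}{\alpha-1}\log h(\alpha), \qquad h(\alpha) := p\, e^{\eps(\alpha-1)} + q\, e^{-\eps(\alpha-1)} + (k-2)\,q,
\]
where $p = e^\eps/(e^\eps+k-1)$ and $q = 1/(e^\eps+k-1)$, giving $h(1) = 1$. Setting $L(\alpha) := \log h(\alpha)$ and Taylor expanding at $\alpha = 1$ (using $L(1) = 0$), I get
\[
g(\alpha) \;=\; \frac{L(\alpha)}{\alpha(\alpha-1)} \;=\; L'(1) + \Bigl(\tfrac{1}{2}L''(1) - L'(1)\Bigr)(\alpha - 1) + O\bigl((\alpha-1)^2\bigr),
\]
so the task reduces to showing that $\tfrac{1}{2}L''(1) > L'(1)$ is equivalent to $k > k^*(\eps)$.

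Direct computation of the derivatives yields $L'(1) = h'(1) = \eps(p - q)$ and $L''(1) = h''(1) - h'(1)^2 = \eps^2(p+q) - \eps^2(p-q)^2$. Substituting $p - q = (e^\eps-1)/(e^\eps+k-1)$ and $p + q = (e^\eps+1)/(e^\eps+k-1)$ and clearing the common factor $(e^\eps+k-1)^2$ converts the inequality into
\[
(e^\eps+k-1)\bigl[\eps(e^\eps+1) - 2(e^\eps-1)\bigr] \;>\; \eps(e^\eps-1)^2.
\]
The bracketed quantity is positive for $\eps > 0$ (its Taylor series at $\eps=0$ begins with $\eps^3/6$), so dividing through by it and isolating $k$ yields, after simplification, a numerator that factors as $2(e^\eps-1)(e^\eps - 1 - \eps)$, producing exactly $k > k^*(\eps)$. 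The main obstacle is not conceptual but mechanical: executing the algebraic simplification cleanly and verifying the sign of $\eps(e^\eps+1) - 2(e^\eps-1)$ so that the direction of the inequality is preserved when one divides.
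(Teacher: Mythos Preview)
Your proposal is correct and takes essentially the same approach as the paper. Both arguments hinge on the same first-order condition at $\alpha=1$: the paper phrases it as $\lim_{\alpha\to 1^+}\heps_{RR}'(\alpha) > \heps_{RR}(1)$ (established in \eqref{eq:cond-inc-rr}) and then integrates, while you phrase it as $g'(1^+)=\tfrac{1}{2}L''(1)-L'(1)>0$ via Taylor expansion; since $g'(1)=\heps_{RR}'(1)-\heps_{RR}(1)$ these are literally the same inequality, and your algebra reproducing $k>k^*(\eps)$ matches the paper's computation.
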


Finally, we also obtain zCDP bounds for the case $k > 6$. First, we note that the RDP curve (see \Cref{prop:rdp-k-rr}) is clearly decreasing in $k$, the zCDP bound must too. Thus, we immediately obtain the following as a corollary of \Cref{thm:zcdp-rr-tight}. 

\begin{corollary}\label{cor:krr-general}
    The $\eps$-DP $k$-RR mechanism is $\left(\eps \cdot \frac{e^\eps - 1}{e^\eps - 1 + \min(k, 6)}\right)$-zCDP.
\end{corollary}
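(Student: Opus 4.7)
The plan is to combine the tight result of \Cref{thm:zcdp-rr-tight} with the simple monotonicity-in-$k$ observation that is hinted at in the paragraph preceding the corollary. For $k \leq 6$ we have $\min(k, 6) = k$, so the claimed bound is exactly the tight zCDP parameter guaranteed by \Cref{thm:zcdp-rr-tight}, and no additional work is needed. For $k > 6$ the claimed bound equals the tight zCDP parameter of the $\eps$-DP $6$-RR mechanism, so it suffices to prove that the tight zCDP parameter of $\eps$-DP $k$-RR is non-increasing in $k$.

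For this, I would argue that the RDP curve $\hat{\eps}_k(\alpha)$ from \Cref{prop:rdp-k-rr} is, for every fixed $\alpha > 1$ and $\eps > 0$, non-increasing in $k$. A direct computation of the R{\'e}nyi divergence between the output distributions on neighboring inputs yields
\[
(\alpha - 1)\,\hat{\eps}_k(\alpha) \;=\; \log \frac{e^{\alpha \eps} + e^{(1-\alpha)\eps} + (k-2)}{e^\eps + k - 1},
\]
and the numerator grows more slowly in $k$ than the denominator, because $e^{\alpha \eps} + e^{(1-\alpha)\eps} \geq e^\eps + 1$ for all $\alpha \geq 1$ (by convexity of $t \mapsto e^{t\eps}$, or equivalently AM-GM applied to $e^{\alpha \eps}$ and $e^{(1-\alpha)\eps}$). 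Hence the ratio inside the log, its logarithm, and therefore $\hat{\eps}_k(\alpha)$ are all non-increasing in $k$. Taking the supremum of $\hat{\eps}_k(\alpha)/\alpha$ over $\alpha > 1$ preserves this monotonicity, which gives the desired comparison of tight zCDP parameters and, combined with the $k \leq 6$ case, the corollary.

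I do not anticipate any serious obstacle: the whole argument reduces to the single inequality $e^{\alpha \eps} + e^{(1-\alpha)\eps} \geq e^\eps + 1$, which is immediate from convexity of the exponential. The only mild caveat is to confirm that \Cref{prop:rdp-k-rr} is written in (or easily rearranged into) the form above; a slightly different but equivalent presentation would still yield the same conclusion via the same one-line monotonicity check.
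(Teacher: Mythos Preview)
Your proposal is correct and follows exactly the paper's approach: the paper's entire proof is the sentence preceding the corollary, which asserts (without further justification) that the RDP curve from \Cref{prop:rdp-k-rr} is decreasing in $k$, hence so is the tight zCDP parameter, and then invokes \Cref{thm:zcdp-rr-tight}. Your write-up is the same argument with the monotonicity step made explicit via the inequality $e^{\alpha\eps} + e^{(1-\alpha)\eps} \geq e^\eps + 1$.
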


An undesirable aspect of the above corollary is that the bound does not converge to 0 as $k \to \infty$. As such, we derive a bound with such behavior below. (Note that as $k \to \infty$, the zCDP bound below becomes $O\left(\frac{\eps^2}{\log k}\right)$.)

\begin{theorem} \label{thm:rr-loose-large-k}
The $\eps$-DP $k$-RR mechanism is $\eps^2 \cdot \max\left\{\frac{1}{\log\left(\frac{1}{\eps}\sqrt{k - 1 + e^{\eps}}\right)}, \frac{1}{\sqrt{k - 1 + e^{\eps}}}\right\}$-zCDP.
\end{theorem}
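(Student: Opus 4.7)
The plan is to bound $\sup_{\alpha>1}\hat{\eps}(\alpha)/\alpha$ by combining two elementary estimates on the R\'enyi divergence of $k$-RR. For neighboring inputs, a direct computation (along the same lines as the proofs earlier in this section) gives
\[
\hat{\eps}(\alpha) = \frac{1}{\alpha-1}\log\!\left(\frac{e^{\alpha\eps} + e^{(1-\alpha)\eps} + k - 2}{M}\right),
\]
where $M := e^\eps + k - 1$. The first estimate is the trivial pure-DP bound $\hat{\eps}(\alpha) \le \eps$. For the second, I would rewrite the argument of the $\log$ as $1 + (e^{\alpha\eps} - e^\eps + e^{(1-\alpha)\eps} - 1)/M$, apply $\log(1+x)\le x$, drop the non-positive contribution from $e^{(1-\alpha)\eps}-1$, and use $(e^{\alpha\eps}-e^\eps)/(\alpha-1)\le \eps e^{\alpha\eps}$ from convexity, obtaining $\hat{\eps}(\alpha) \le \eps e^{\alpha\eps}/M$. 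Taking the geometric mean of these two estimates also gives $\hat{\eps}(\alpha) \le \eps e^{\alpha\eps/2}/\sqrt{M}$, which will be useful for the $1/\sqrt{M}$ half of the target.

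Since the statement is a $\max$ of two bounds, it is enough to verify each target in a regime in which it is the larger one. For the $\eps^2/\log(\sqrt{M}/\eps)$ target, I would set $\alpha^\star := \log(\sqrt{M}/\eps)/\eps$ and split at $\alpha^\star$: on $[\alpha^\star,\infty)$ the pure-DP bound immediately gives $\hat{\eps}(\alpha)/\alpha \le \eps/\alpha^\star = \eps^2/\log(\sqrt{M}/\eps)$, while on $[1,\alpha^\star]$ the second estimate gives $\hat{\eps}(\alpha)/\alpha \le \eps e^{\alpha\eps}/(\alpha M)$. Viewed as a function of $\alpha$, the latter has a unique interior critical point (a minimum) at $\alpha = 1/\eps$, so its supremum over a bounded interval is attained at an endpoint and can be evaluated in closed form. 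For the $\eps^2/\sqrt{M}$ target, I would instead set $\alpha^{\star\star} := \sqrt{M}/\eps$ and pair the pure-DP bound with the geometric-mean bound $\hat{\eps}(\alpha)/\alpha \le \eps e^{\alpha\eps/2}/(\alpha\sqrt{M})$, whose only interior critical point is the minimum at $\alpha = 2/\eps$, so the same kind of endpoint analysis applies.

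The main obstacle I expect is the small-$\alpha$ half of each split (below the crossover $\alpha^\star$ or $\alpha^{\star\star}$), where the bound is controlled by a non-monotone function of $\alpha$ and the endpoint values must each be compared to the target. This is precisely where the $\max$ in the statement does real work: whenever the endpoint analysis from one split fails in a given parameter regime of $(\eps, M)$, the other split succeeds, so that in every regime the weaker of the two terms in the $\max$ is a valid upper bound. Carefully tracking which split to use---governed by elementary comparisons between $\eps$, $1/\sqrt{M}$, and $\log(\sqrt{M}/\eps)/\sqrt{M}$---then completes the proof.
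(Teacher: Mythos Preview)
Your proposal identifies the same two elementary estimates the paper uses—the pure-DP bound $\heps(\alpha)\le\eps$ and the bound $\heps(\alpha)\le \eps e^{\alpha\eps}/M$ obtained via $\log(1+x)\le x$—but the organization leaves a real gap. The paper does \emph{not} run two separate splits, one per term of the $\max$. It sets a single threshold $\alpha^\star=\eps/\rho$; for $\alpha\ge\alpha^\star$ the pure-DP bound gives $\heps(\alpha)/\alpha\le\rho$ directly, and for $\alpha<\alpha^\star$ it uses the uniform bound $\heps(\alpha)/\alpha\le\heps(\alpha)\le \eps\, e^{\alpha^\star\eps}/M=\eps\, e^{\eps^2/\rho}/M$ and then applies both terms of the $\max$ \emph{together}: the logarithmic term is used to control $e^{\eps^2/\rho}$, and the resulting quantity is compared to the $\eps^2/\sqrt{M}$ term. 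No endpoint analysis of a non-monotone function is needed.

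Your ``switch when one fails'' clause does not close. The geometric-mean bound never beats the target $\eps^2/\sqrt{M}$ at either endpoint of $[1,\alpha^{\star\star}]$: at $\alpha=1$ you would need $e^{\eps/2}\le\eps$, and at $\alpha^{\star\star}=\sqrt{M}/\eps$ you would need $e^{\sqrt{M}/2}\le\sqrt{M}$; neither inequality holds for any $\eps>0$, $M\ge 2$. So the second split contributes nothing and the first split must carry everything—but its left endpoint $\alpha=1$ yields $\eps e^{\eps}/M$, which exceeds both terms of the $\max$ whenever $\eps$ is small (e.g.\ $\eps=0.01$, $k=100$ gives a bound of roughly $10^{-4}$ against $\rho\approx 1.5\times 10^{-5}$). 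The missing idea is precisely the paper's: let the two terms of the $\max$ cooperate inside one inequality chain rather than trying to establish each target on its own.
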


We stress that the bounds in \Cref{cor:krr-general} and \Cref{thm:rr-loose-large-k} are \emph{not} tight and it remains an interesting question to obtain tight bounds for these regimes of parameters. 

\subsection{RDP Bound Derivation}

We start by computing the tight bounds for RDP guarantees of $k$-RR.

\begin{proposition}\label{prop:rdp-k-rr}
The $\eps$-DP $k$-RR mechanism is tightly-$(\alpha, \hat{\eps}(\alpha))$-RDP for all $\alpha > 1$ where
\begin{align*}
\heps_{RR}(\alpha) &=
\frac{1}{\alpha - 1} \log\left(
\frac{e^{\alpha \eps} + e^{(1-\alpha)\eps} + k - 2}
{k - 1 + e^\eps}
\right).
\end{align*}
Furthermore, we have $\hat{\eps}_{RR}(1) := \lim_{\alpha \to 1} \heps_{RR}(\alpha)$ is equal to $\eps \cdot \frac{e^\eps - 1}{e^\eps - 1 + k}$.
\end{proposition}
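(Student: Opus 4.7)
My plan is to compute the R\'enyi divergence directly from the definition, exploiting the symmetry of $k$-RR, and then take the $\alpha\to 1$ limit via L'H\^opital's rule.

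First, because $k$-RR is symmetric under permutations of $[k]$, the divergence $\dr{M(x)}{M(x')}$ depends only on whether $x=x'$ (giving $0$) or $x\ne x'$. It therefore suffices to fix the pair $(x,x')=(1,2)$, compute the divergence, and note that this is simultaneously the worst case over all neighboring inputs (establishing tightness). Write $p=\frac{e^\eps}{e^\eps+k-1}$ and $q=\frac{1}{e^\eps+k-1}$, so that $p/q=e^\eps$ and $P(1)=p,\ P(2)=\cdots=P(k)=q$ while $Q(2)=p,\ Q(i)=q$ for $i\ne 2$.

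The core calculation is to evaluate
\[
\sum_{i=1}^{k} P(i)^\alpha Q(i)^{1-\alpha}
= p^\alpha q^{1-\alpha} + q^\alpha p^{1-\alpha} + (k-2)\,q,
\]
where the three terms handle $i=1$, $i=2$, and $i\in\{3,\dots,k\}$ respectively. Factoring out $q$ and using $p/q=e^\eps$ yields $q\bigl(e^{\alpha\eps}+e^{(1-\alpha)\eps}+k-2\bigr)$, and substituting $q=\frac{1}{k-1+e^\eps}$ gives exactly the claimed expression for $\heps_{RR}(\alpha)$ after dividing by $\alpha-1$ and taking the logarithm. There is no real obstacle here; the computation is a direct partition of the sample space.

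For the $\alpha\to 1$ limit, observe that at $\alpha=1$ the numerator of the log argument equals $e^\eps+1+k-2=e^\eps+k-1$, so the log vanishes and we have a $0/0$ form. Applying L'H\^opital's rule to $\frac{N(\alpha)}{\alpha-1}$ where $N(\alpha)=\log\bigl(e^{\alpha\eps}+e^{(1-\alpha)\eps}+k-2\bigr)-\log(k-1+e^\eps)$ gives
\[
\lim_{\alpha\to 1}\heps_{RR}(\alpha) = N'(1) = \frac{\eps e^{\eps}-\eps e^{0}}{e^\eps+1+k-2} = \eps\cdot\frac{e^\eps-1}{e^\eps-1+k},
\]
as claimed. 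The whole proof is essentially a bookkeeping exercise, so I do not anticipate any genuinely hard step; the only thing to be slightly careful about is making the symmetry argument explicit so that the three-term decomposition above really does capture the worst-case divergence.
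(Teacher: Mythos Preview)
Your proposal is correct and follows essentially the same approach as the paper: reduce by symmetry to a single pair of inputs, compute the R\'enyi divergence via the three-term decomposition $p^\alpha q^{1-\alpha} + q^\alpha p^{1-\alpha} + (k-2)q$, and evaluate the $\alpha\to 1$ limit by L'H\^opital's rule. Your write-up is in fact slightly more explicit than the paper's on both the symmetry justification and the limit computation.
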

\begin{proof}
 Without loss of generality (due to symmetry), it suffices to consider distributions $P$ and $Q$, which are the distribution of the mechanism's output on the first symbol and second symbol, respectively. Let $p = \frac{e^\eps}{k-1+e^\eps}$ and $q = \frac{1}{k-1+e^\eps}$, then
\begin{align*}
e^{(\alpha - 1) \hat{\eps}_{RR}(\alpha)} = e^{(\alpha - 1)\dr{P}{Q}}= \sum_{x \in X} f_P(x)^\alpha f_Q(x)^{1-\alpha} &= p^\alpha q^{1-\alpha} + q^\alpha p^{1-\alpha} + (k-2)q\\
&= \frac{e^{\alpha \eps} + e^{(1-\alpha)\eps} + k - 2}{k - 1 + e^\eps}.
\end{align*}

$\hat{\eps}_{RR}(1)$ can be computed by a simple application of L'Hôpital's Rule.
\end{proof}


\subsection{From RDP to zCDP}

A tight CDP bound for $k$-RR is difficult in general, because the RDP curve is not always concave in $\alpha$. We derive tight bounds in the cases where the curve is concave. 

\begin{proof}[Proof of \Cref{thm:zcdp-rr-tight}]
We will show the following two properties of $\heps_{RR}$ (assuming that $2 \leq k \leq 6$)\footnote{It is straightforward to improve upon \Cref{thm:zcdp-rr-tight} and \Cref{cor:krr-general} with the current proof technique by taking more complex constraints on $k$. We omit this for sake of presentation, but it allows extending the tightness bound for moderate $k > 6$.}:
\begin{enumerate}[(A)]
\item $\lim_{\alpha \to 1^+} \heps_{RR}(\alpha) = \heps(1)$, and, \label{enum:increasing}
\item $\heps''(\alpha) \leq 0$ for all $\alpha > 1$. (i.e. concavity) \label{enum:concave}
\end{enumerate}
From \Cref{lem:concavity}, these two properties imply that $\frac{\heps_{RR}(\alpha)}{\alpha} \leq \heps_{RR}(1)$ for all $\alpha > 1$. In other words, the mechanism is $\heps_{RR}(1)$-zCDP. The tightness also follow immediately from the tightness of the RDP curve (and taking $\alpha \to 1$).

Thus, we are left to prove \Cref{enum:increasing,enum:concave}.

\paragraph{Proof of \Cref{enum:increasing}.}
Define $H(\alpha) = \log(e^{(1-\alpha)\eps}+e^{\alpha \eps} + k - 2)$. The RDP curve for $k$-RR can be written in terms of $H$ as follows:
$$
\heps_{RR}(\alpha) = \frac{H(\alpha) - H(1)}{\alpha - 1}.
$$
The first derivative of $\heps_{RR}$ is
\begin{align*}
\hat{\eps}'(\alpha) &= \frac{(\alpha -1) H'(\alpha )-H(\alpha )+H(1)}{(\alpha -1)^2}.\\
\end{align*}
Applying L'Hôpital's Rule, we get
\begin{align*}
\lim_{\alpha \to 1} \hat{\eps}'_{RR}(\alpha) = \lim_{\alpha \to 1} \frac{H''(\alpha)}{2} = \frac{\eps^2(k + e^\eps - 1)(e^\eps + 1) - \eps^2(e^\eps - 1)^2}{2\left(k+e^{\eps }-1\right)^2} = \frac{\eps ^2 \left((k+2) e^{\eps }+k-2\right)}{2 \left(k+e^{\eps }-1\right)^2}.
\end{align*}

Thus,
\begin{align}
\lim_{\alpha \to 1^+} \hat{\eps}'_{RR}(\alpha) \le 
\hat{\eps}_{RR}(1) &\iff
\frac{\eps ^2 \left((k+2) e^{\eps }+k-2\right)}{2 \left(k+e^{\eps }-1\right)^2}
\le \eps \frac{e^\eps - 1}{e^\eps - 1 + k} \nonumber \\
&\iff k\le \frac{2 \left(e^{\eps }-1\right) \left(e^{\eps }-1-\eps\right)}{\eps(e^{\eps } + 1) -2 e^{\eps }+2} \label{eq:cond-inc-rr} 
\end{align}
We conclude by noting this constraint is satisfied when $2 \le k \le 6$, i.e. the RHS of \eqref{eq:cond-inc-rr} is at least 6.

To see this, consider the Taylor's expansion of 6 times the denominator, we have
\begin{align*}
6\left(\eps(e^{\eps } + 1) -2 e^{\eps }+2\right)
&= 6\left((e^{\eps} - 1 - \eps) \eps - 2\left(e^\eps - 1 - \eps - \frac{\eps^2}{2}\right)\right) \\
&= 6\left(\left(\sum_{i=2}^\infty \frac{\eps^i}{i!}\right) \cdot \eps - 2\left(\sum_{i=3}^\infty \frac{\eps^i}{i!}\right)\right) \\
&= \left(\eps^2\right) \cdot \left(\sum_{i=1}^{\infty}  \eps^i \cdot 6\left(\frac{1}{(i + 1)!} - \frac{2}{(i + 2)!}\right)\right) \\
&= \left(\eps^2\right) \cdot \left(\sum_{i=1}^{\infty}  \frac{\eps^i}{i!} \cdot \frac{6i}{(i + 1)(i + 2)}\right) \\
&\leq \left(2 \sum_{i=2}^{\infty} \frac{\eps^i}{i!}\right)\left(\sum_{i=1}^{\infty}  \frac{\eps^i}{i!}\right)\\
&= 2 \left(e^{\eps }-1-\eps\right) \left(e^{\eps }-1\right),
\end{align*}
where the inequality is true term-by-term. Thus, \eqref{eq:cond-inc-rr} always holds for $k \leq 6.$

\paragraph{Proof of \Cref{enum:concave}.}
%
The second derivative of $\heps_{RR}$ is
\begin{align*}
\heps''_{RR}(\alpha) = \frac{(\alpha -1)^2 H''(\alpha )-2 (\alpha -1) H'(\alpha )+2 H(\alpha )-2 H(1)}{(\alpha -1)^3}.
\end{align*}

Let $\psi(\alpha)$ denote the numerator of $\heps''_{RR}(\alpha)$ above. To show that $\heps''_{RR}(\alpha) \geq 0$, it suffices to show that $\psi(\alpha) \geq 0$ for all $\alpha > 1$. Observe that $\lim_{\alpha \to 1^+} \psi(\alpha) = 0$. Thus, to prove the non-negativity of $\psi(\alpha)$, it in turn suffices to show that $\psi'(\alpha) \le 0$ for all $\alpha > 1$.

To show this, note that $\psi'(\alpha) = H'''(\alpha) (\alpha - 1)^2$, so the sign of $\psi'(\alpha)$ is determined by the sign of
$$
H'''(\alpha) = \frac{\eps ^3 \left(e^{\eps -\alpha  \eps }-e^{\alpha  \eps }\right) \left((k-2) e^{\alpha  \eps }+(k-2) e^{\eps -\alpha  \eps }-(k-2)^2+8 e^{\eps }\right)}{\left(e^{\alpha  \eps }+e^{\eps -\alpha  \eps }+k-2\right)^3},
$$

which is non-positive if the last term in the numerator is non-negative (since all other terms but $\left(e^{\eps -\alpha  \eps }-e^{\alpha  \eps }\right)$ are non-negative). It is clear that this last term is increasing in $\eps \in [0, \infty)$. This means that 
\begin{align*}
(k-2) e^{\alpha  \eps }+(k-2) e^{\eps -\alpha  \eps }-(k-2)^2+8 e^{\eps } \geq (k - 2) + (k - 2) - (k - 2)^2 + 8 = k(6 - k),
\end{align*}
which is non-negative for $k \leq 6$. Thus, $H'''(\alpha) \leq 0$ and we can conclude that $\heps''_{RR}(\alpha)$ is concave.
\end{proof}

\subsection{Non-Optimality of $\alpha = 1$}

The above proof also yields a rather simple criterion to certify that $\alpha = 1$ is not the optimal for sufficiently large $k$, as formalized below.

\begin{proof}[Proof of \Cref{lem:non-opt-rr}]
From \Cref{eq:cond-inc-rr}, in this regime we have $\lim_{\alpha \to 1^+} \hat{\eps}'_{RR}(\alpha) > \heps_{RR}(1)$. Since $\frac{\heps'_{RR}(\alpha)}{2\alpha - 1}$ is continuous (in $\alpha$), there exists $\alpha_0 > 1$ such that $\frac{\heps'_{RR}(\alpha)}{2\alpha - 1} > \heps_{RR}(1)$ for all $\alpha \in (1, \alpha)$. Thus, we have
\begin{align*}
\heps_{RR}(\alpha_0) - \alpha_0 \cdot \heps_{RR}(1) = \int_{1}^{\alpha_0} (\heps'_{RR}(\alpha) - \heps_{RR}(1))  \mathrm{d} \alpha > 0.
\end{align*}
Thus, $\frac{\heps_{RR}(\alpha_0)}{\alpha_0} > \heps_{RR}(1)$, implying that the mechanism is not $\heps_{RR}(1)$-zCDP.
\end{proof}

\subsection{Loose zCDP for Large $k$ Regime}

Finally, we prove a loose bound that converges to zero as $k \to \infty$ via a commonly used approach of separately considering smaller and larger $\alpha$.

\begin{proof}[Proof of \Cref{thm:rr-loose-large-k}]
Let $\rho = \eps^2 \cdot \max\left\{\frac{1}{\log\left(\eps\sqrt{k - 1 + e^{\eps}}\right)}, \frac{1}{\sqrt{k - 1 + e^{\eps}}}\right\}$. Let $\alpha^* = \frac{\eps}{\rho}$. Consider two cases based on the value of $\alpha$:
\begin{itemize}
\item Case I: $\alpha \geq \alpha^*$. In this case, we simply have $\frac{\heps_{RR}(\alpha)}{\alpha} \leq \frac{\eps}{\alpha} \leq \rho$, where the first inequality follows from the fact that any $\eps$-DP algorithm is $(\alpha, \eps)$-DP for all $\alpha > 1$.
\item Case II: $\alpha < \alpha^*$. In this case, we have
\begin{align*}
\frac{\heps(\alpha)}{\alpha} \leq \heps(\alpha) &= \frac{1}{\alpha - 1} \cdot \log \left(\frac{e^{\alpha \eps} + e^{(1-\alpha)\eps} + k - 2}{k - 1 + e^\eps}\right) \\
&= \frac{1}{\alpha - 1} \cdot \left(\frac{e^{\alpha \eps} + e^{(1-\alpha)\eps} + k - 2}{k - 1 + e^\eps} - 1\right) \\
&= \frac{1 - e^{(1 - \alpha)\eps}}{\alpha - 1} \cdot \frac{e^{\alpha \eps} - 1}{k - 1 + e^\eps} \\
&\leq \eps \cdot \frac{e^{\alpha^* \eps}}{k -  1 + e^\eps} \\
&= \eps \cdot \frac{e^{\eps^2 / \rho}}{k -  1 + e^\eps} \\
&\leq \frac{\eps^2}{\sqrt{k - 1 + e^{\eps}}}\\
&\leq \rho.
\end{align*}
\end{itemize}
Thus, the mechanism is $\rho$-zCDP as claimed.
\end{proof}
\subsection{RAPPOR}

Next, we consider the (basic) RAPPOR mechanism~\cite{ErlingssonPK14}.
The mechanism takes input from $[d]$ and outputs a $d$-bit string (i.e. $\cX = [d], \cY = \{0, 1\}^d$ in \Cref{def:dp,def:rdp}). It first encodes its input as a one-hot vector, and then flips each bit in the vector independently. To satisfy $\eps$-DP, its output distribution is
$$
Pr[M(X) = x] = \prod_{i=1}^d
    \begin{cases}
    \frac{e^{\eps/2}}{e^{\eps/2} + 1} & \text{ if $\hot(X)_i = x_i$}\\
    \frac{1}{e^{\eps/2} + 1} & \text{ if $\hot(X)_i \neq x_i$}\\
    \end{cases},
$$
where $\hot(X)$ refers to the $d$-length one-hot vector where the $X$th entry is 1 and the rest are 0.

The tight RDP and zCDP bounds for RAPPOR follows almost immediately from those of Binary RR (from the previous section), as formalized below.

\begin{proposition}
The $\eps$-DP RAPPOR mechanism is tightly-$(\alpha, \hat{\eps}_{RAP}(\alpha))$-RDP for all $\alpha > 1$ where 
$$
\hat{\eps}_{RAP}(\alpha) = \frac{2}{\alpha - 1}\log\left(
\frac{e^{\alpha \eps/2} + e^{(1-\alpha) \eps/2}}
{e^{\eps/2} + 1}
\right).
$$
\end{proposition}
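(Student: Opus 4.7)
The plan is to reduce the RAPPOR Rényi divergence to two independent copies of binary Randomized Response with parameter $\eps/2$ and then invoke \Cref{prop:rdp-k-rr}.

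First I would fix two neighboring inputs $X, X' \in [d]$ with $X \neq X'$ and observe that $\hot(X)$ and $\hot(X')$ differ in exactly two coordinates (the $X$-th and $X'$-th), while all remaining $d - 2$ coordinates are $0$ in both one-hot vectors. Since the $d$ output bits are flipped independently by the RAPPOR construction, both $M(X)$ and $M(X')$ are product distributions over $\{0,1\}^d$. By the well-known additivity of Rényi divergence under tensorization (i.e.\ for independent coordinates the Rényi divergence is the sum of per-coordinate divergences), we get
\[
\dr{M(X)}{M(X')} = \sum_{i=1}^{d} \dr{M(X)_i}{M(X')_i}.
\]
For the $d - 2$ coordinates where $\hot(X)_i = \hot(X')_i = 0$, the two marginal distributions are identical, contributing $0$ to the sum. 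For each of the remaining two coordinates, the pair $(M(X)_i, M(X')_i)$ is distributed exactly as the outputs of the binary Randomized Response with privacy parameter $\eps/2$ on its two input symbols (in one coordinate $\hot(X)_i = 1, \hot(X')_i = 0$, and vice versa in the other).

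Next I would apply \Cref{prop:rdp-k-rr} with $k = 2$ and privacy parameter $\eps/2$, which gives the per-coordinate divergence
\[
\frac{1}{\alpha - 1} \log\!\left( \frac{e^{\alpha \eps / 2} + e^{(1-\alpha)\eps/2}}{e^{\eps/2} + 1} \right).
\]
Summing the two nonzero coordinate contributions yields exactly the claimed
\[
\heps_{RAP}(\alpha) \;=\; \frac{2}{\alpha - 1}\log\!\left( \frac{e^{\alpha \eps / 2} + e^{(1-\alpha)\eps/2}}{e^{\eps/2} + 1} \right).
\]
Finally, tightness is immediate: every pair of distinct neighbors $X \neq X'$ produces exactly two differing one-hot coordinates, so the above bound is attained by any such pair (and the per-coordinate bound is tight by the tightness claim in \Cref{prop:rdp-k-rr} with $k = 2$).

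There is no substantive obstacle here; the only things to be careful about are (i) making the tensorization/additivity step explicit, and (ii) checking that the $d-2$ agreeing coordinates indeed contribute zero divergence (which is immediate since their marginals are identical Bernoulli distributions with bias $1/(e^{\eps/2}+1)$).
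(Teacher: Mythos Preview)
Your proposal is correct and follows essentially the same approach as the paper: both arguments use the product structure of RAPPOR, the additivity of R\'enyi divergence over independent coordinates, and the observation that only the two differing one-hot coordinates contribute, each as a copy of $(\eps/2)$-DP binary Randomized Response handled via \Cref{prop:rdp-k-rr}. The paper's proof is slightly terser (it fixes the first and second symbol by symmetry), but the content is identical.
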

\begin{proof}
Without loss of generality (due to symmetry), it suffices to consider distributions $P$ and $Q$, which are the distribution of the mechanism's output on the first symbol and second symbol, respectively. Let $P_i, Q_i$ denote the distributions of the $i$-th coordinate of $P, Q$ respectively. Notice that $P, Q$ are product distributions, and that $P_i = Q_i$ for all $i \notin \{1, 2\}$. Thus, from additivity of R{\'e}nyi divergence, we have
\begin{align*}
\hat{\eps}_{RAP}(\alpha) = \dr{P}{Q}
= \dr{P_1}{Q_1} + \dr{P_2}{Q_2}.
\end{align*}
Finally, observe that $P_i, Q_i$ for $i \in \{1, 2\}$ are exactly the same distribution as that of the $(\eps/2)$-DP Binary RR. Thus, we may invoke the bound from \Cref{prop:rdp-k-rr} to arrive at the claimed bound. 
\end{proof}

\begin{proposition}\label{prop:zcdp-rappor}
The $\eps$-DP RAPPOR mechanism is tightly-$\left(\eps \cdot \tanh(\eps / 4)\right)$-zCDP.
\end{proposition}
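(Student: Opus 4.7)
The plan is to reduce \Cref{prop:zcdp-rappor} to the $k = 2$ case of \Cref{thm:zcdp-rr-tight} applied at a rescaled privacy parameter. The key structural observation is that the tight RDP formula $\hat{\eps}_{RAP}(\alpha)$ from the preceding proposition is exactly twice the tight RDP formula of $(\eps/2)$-DP binary randomized response. To see this, specialize \Cref{prop:rdp-k-rr} to $k = 2$ and privacy parameter $\eps/2$: the $k-2$ term vanishes and $k - 1 + e^{\eps/2} = 1 + e^{\eps/2}$, giving $\hat{\eps}_{RR}(\alpha) = \frac{1}{\alpha - 1}\log\left(\frac{e^{\alpha\eps/2} + e^{(1-\alpha)\eps/2}}{1 + e^{\eps/2}}\right)$, which is exactly $\tfrac{1}{2}\hat{\eps}_{RAP}(\alpha)$.

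From this identity I will deduce that $\hat{\eps}_{RAP}(\alpha)/\alpha = 2 \cdot \hat{\eps}_{RR}(\alpha)/\alpha$ for every $\alpha > 1$, so the tight zCDP constant for RAPPOR is exactly twice the tight zCDP constant of $(\eps/2)$-DP binary RR, and the optimum over $\alpha$ is attained at the same point. Next I will apply \Cref{thm:zcdp-rr-tight} with $k = 2$ (which is in the valid range $2 \le k \le 6$) and privacy parameter $\eps/2$, obtaining the tight value $\frac{\eps}{2} \cdot \frac{e^{\eps/2} - 1}{e^{\eps/2} + 1}$. Finally, I will apply the hyperbolic identity $\frac{e^x - 1}{e^x + 1} = \tanh(x/2)$ with $x = \eps/2$ to rewrite this as $(\eps/2)\tanh(\eps/4)$, and doubling produces the claimed bound $\eps \tanh(\eps/4)$.

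Because the entire argument is a one-line reduction to the already-proven \Cref{thm:zcdp-rr-tight}, I do not anticipate any real obstacle; the only substantive step is spotting the exact factor-of-$2$ relationship between the two RDP curves, which is essentially forced by the additivity of R\'enyi divergence across the two coordinates in which the product distributions of RAPPOR on neighboring inputs differ. Tightness likewise transfers for free: it is witnessed at $\alpha \to 1^+$ in the binary RR case by \Cref{thm:zcdp-rr-tight}, and the same $\alpha$ witnesses tightness for RAPPOR since the ratio $\hat{\eps}_{RAP}(\alpha)/\alpha$ is exactly twice $\hat{\eps}_{RR}(\alpha)/\alpha$.
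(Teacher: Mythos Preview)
Your proposal is correct and is essentially identical to the paper's own proof: the paper also observes that $\hat{\eps}_{RAP}(\alpha)$ is exactly twice the RDP bound of $(\eps/2)$-DP binary RR and then invokes \Cref{thm:zcdp-rr-tight} at $k=2$. You have simply spelled out the algebra (the specialization of \Cref{prop:rdp-k-rr} and the $\tanh$ identity) that the paper leaves implicit.
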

\begin{proof}
    This follows immediately from \Cref{thm:zcdp-rr-tight} since the RDP bound is exactly twice the RDP bound for $(\eps / 2)$-DP Binary RR.
\end{proof}
\section{$\eta$-Bounded Range Mechanisms} 

Finally, we consider $\eta$-Bounded Range ($\eta$-BR) mechanisms , and we prove tight zCDP bound in this case as well.

\begin{theorem}\label{thm:br-cdp}
Any $\eta$-BR mechanism satisfies $\rho$-CDP where
$$
\rho =\frac{\eta}{e^\eta - 1} + \log\left(
\frac{e^{\eta }-1}{\eta }\right) - 1.
$$
Furthermore, this is tight\footnote{Note that we do \emph{not} use the terminology tightly-$(\alpha, \rho)$-zCDP directly here, since ``$\eta$-BR mechanisms'' constitute a class of mechanisms rather than a single mechanism.}, i.e. there exists an $\eta$-BR mechanism which is tightly-$\rho(\eta)$-CDP.
\end{theorem}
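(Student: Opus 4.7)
The plan is to reduce to Bernoulli (two-point) pairs satisfying the bounded-range constraint, compute the optimal KL divergence in closed form, and then verify that this same value dominates the ratio $D_\alpha/\alpha$ at every order $\alpha>1$.

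First, for any $\eta$-BR pair $(P,Q)$ with privacy loss $L := \log(P/Q)$, we have $L\in[t,t+\eta]$ for some $t$ and $\mathbb{E}_P[e^{-L}]=1$ (the latter being normalization of $Q$). Since $e^{(\alpha-1)L}$ is convex in $L$ for $\alpha>1$, a standard extreme-point argument on probability measures subject to one linear moment constraint shows that the supremum of $D_\alpha(P\|Q)=\tfrac{1}{\alpha-1}\log\mathbb{E}_P[e^{(\alpha-1)L}]$ is achieved when $L$ is supported on the two endpoints $\{t,t+\eta\}$ (a separate spreading argument pins the atoms to the endpoints rather than interior points). Writing $s\in[0,1]$ for $P(L=t+\eta)$, the constraint fixes $e^t=1-s(1-e^{-\eta})$, reducing the problem to
\[ D_\alpha(s) = \log(1-s(1-e^{-\eta})) + \tfrac{1}{\alpha-1}\log(1+s(e^{(\alpha-1)\eta}-1)), \]
with $\alpha\to 1^+$ limit $D_1(s)=\log(1-s(1-e^{-\eta}))+s\eta$ by L'H\^opital.

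Second, the KL case is pure one-variable calculus: $D_1'(s)=0$ gives the optimizer $s^* = \tfrac{1}{1-e^{-\eta}}-\tfrac{1}{\eta}$, and substituting back recovers exactly $\max_s D_1(s)=\tfrac{\eta}{e^\eta-1}+\log\tfrac{e^\eta-1}{\eta}-1 = \rho$. Tightness is then already in hand: the binary mechanism on $\{0,1\}$ whose output distributions realize this extremal two-point pair is $\eta$-BR by construction and has KL divergence exactly $\rho$, so no smaller $\rho$ can be a zCDP bound for all $\eta$-BR mechanisms.

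The remaining and most delicate task is to upgrade this to the $\alpha$-uniform bound $D_\alpha(s)/\alpha\le\rho$ for all $s\in[0,1]$ and $\alpha>1$. The pointwise-in-$s$ version of the inequality actually fails (the derivative of $D_\alpha(s)/\alpha$ at $\alpha=1$ can be shown to be positive for some interior $s$), so I would work with the envelope $\heps_{BR}(\alpha):=\max_s D_\alpha(s)$, whose unique interior maximizer solves $\partial_s D_\alpha=0$ in closed form as $s^*(\alpha)=\tfrac{1}{\alpha(1-e^{-\eta})}-\tfrac{\alpha-1}{\alpha(e^{(\alpha-1)\eta}-1)}$. Following the recipe used elsewhere in this paper, I would then apply \Cref{lem:concavity} to $\heps_{BR}$, which reduces to verifying (i) $\heps_{BR}'(1)\le\rho$, a scalar inequality in $\eta$ obtainable via the envelope identity $\heps_{BR}'(1)=\tfrac{1}{2}s^*(1-s^*)\eta^2$ together with Taylor/monotonicity arguments in the style of Sections 3--4, and (ii) concavity of $\heps_{BR}$ on $[1,\infty)$. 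Condition (ii) is the crux of the proof: after substituting $s^*(\alpha)$ into $D_\alpha$, I expect the second derivative to factor into an expression involving $\phi(x)=\log(\sinh x/x)$ from \Cref{lem:log-sinh-over-x-convex-and-superadditive}, whose sign can be deduced from the convexity and superadditivity of $\phi$ together with \Cref{lem:sinhx-divx-inc}, in direct analogy with the discrete Laplace proof. If this explicit manipulation of $\heps_{BR}''$ proves intractable, an equivalent target is to show that $\heps_{BR}(\alpha)/\alpha$ is monotonically non-increasing on $[1,\infty)$ via the single-variable inequality $\heps_{BR}'(\alpha)\le\heps_{BR}(\alpha)/\alpha$ for $\alpha>1$.
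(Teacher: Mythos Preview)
Your reduction to two-point pairs, the parameterization by $s$, and the KL optimization yielding $\rho$ are all correct and essentially match the paper's \Cref{lem:br-rdp-one} and \Cref{thm:br-rdp} (the paper parameterizes by the shift $t$ rather than the mass $s$, but these are interchangeable). The tightness witness you describe is also the paper's.

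The divergence is in the crucial final step. You propose to apply \Cref{lem:concavity}, i.e.\ to verify concavity of $\heps_{BR}(\alpha)$ together with $\heps_{BR}'(1)\le\rho$, and you speculate that $\heps_{BR}''$ will factor through $\phi(x)=\log(\sinh x/x)$ from \Cref{lem:log-sinh-over-x-convex-and-superadditive}. The paper does \emph{not} go this route and never claims concavity of $\heps_{BR}$; your proposal leaves this entirely unverified, and since $\heps_{BR}$ is a pointwise supremum over a family, concavity is not automatic. Instead the paper takes your fallback option---showing directly that $\heps_{BR}(\alpha)/\alpha$ is non-increasing---but with a key reparameterization trick you do not anticipate: after computing $g(\alpha,\eta):=\frac{d}{d\alpha}\big(\heps_{BR}(\alpha)/\alpha\big)$, the paper \emph{freezes $\alpha$ and differentiates in $\eta$}. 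One checks $\lim_{\eta\to 0^+}g_\alpha(\eta)=0$, and the $\eta$-derivative factors cleanly as a positive multiple of $h_1(\alpha,\eta)\cdot h_2(\alpha,\eta)$, where $h_1\ge 0$ and $h_2\le 0$ each reduce (via the identities $\cosh(2x)=1+2\sinh^2 x$ and $\sinh(x-y)=\sinh x\cosh y-\cosh x\sinh y$) to the monotonicity of $\sinh(x)/x$ from \Cref{lem:sinhx-divx-inc}. In particular, the tool is \Cref{lem:sinhx-divx-inc} alone, not the convexity/superadditivity of $\phi$ from \Cref{lem:log-sinh-over-x-convex-and-superadditive}.

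So your outline is sound, but the concavity route is speculative and not what the paper does; the missing idea is to switch the differentiation variable from $\alpha$ to $\eta$, which is what makes the sign analysis tractable.
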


\subsection{RDP Bound Derivation}

We start by proving the following lemma, which yields a bound for the R{\'e}nyi divergence when given a parameter $t$ in \Cref{def:br}.

\begin{lemma}\label{lem:br-rdp-one}
Let $t, \eta$ be real numbers such that $\eta > 0$ and $0 \leq t \leq \eta$, and
let $P$ and $Q$ be distributions such that $\log\left(\frac{P(x)}{Q(x)}\right) \in [-t, -t + \eta]$ for all $x \in \supp(P) \cup \supp(Q)$. Then, for all $\alpha > 1$, we have
$$
\dr{P}{Q} \le \frac{1}{\alpha-1} \log\left(\frac{e^\eta - e^t}{e^\eta - 1} e^{-t \alpha} +
\frac{e^t - 1}{e^\eta - 1} e^{(-t + \eta)\alpha}
\right)
=: \heps_{BR}(t; \alpha).
$$
Moreover, there exist a pair of distributions $P, Q$ with $\log\left(\frac{P(x)}{Q(x)}\right) \in \{-t, -t + \eta\}$ for all $x \in \supp(P) \cup \supp(Q)$ such that the above inequality is an equality.
\end{lemma}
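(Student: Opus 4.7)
The plan is to reduce the Rényi divergence bound to a one-dimensional convexity inequality on the likelihood ratio, and then identify the worst case as a two-point distribution.

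Let $L(x) = P(x)/Q(x)$, so the hypothesis says $L(x) \in [a, b]$ where $a := e^{-t}$ and $b := e^{-t+\eta}$. Recall
\[
(\alpha-1)\,\dr{P}{Q} \;=\; \log \mathbb{E}_{x \sim Q}\!\left[ L(x)^{\alpha} \right],
\]
so it suffices to upper bound $\mathbb{E}_Q[L^\alpha]$. Because $P$ and $Q$ are probability distributions, we have the mean constraint $\mathbb{E}_Q[L] = \sum_x Q(x)\cdot P(x)/Q(x) = 1$.

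The key step is the following chord bound: since $u \mapsto u^\alpha$ is convex on $[a,b]$ (for $\alpha>1$), the graph lies below the secant line joining $(a, a^\alpha)$ and $(b, b^\alpha)$, i.e.
\[
u^\alpha \;\le\; \frac{b - u}{b - a}\, a^\alpha + \frac{u - a}{b - a}\, b^\alpha \qquad \text{for all } u \in [a, b].
\]
Apply this pointwise with $u = L(x)$, take $\mathbb{E}_Q$, and use $\mathbb{E}_Q[L]=1$:
\[
\mathbb{E}_Q[L^\alpha] \;\le\; \frac{b-1}{b-a}\, a^\alpha + \frac{1-a}{b-a}\, b^\alpha.
\]
Plugging in $a = e^{-t}$, $b = e^{-t+\eta}$, the denominator becomes $b-a = e^{-t}(e^\eta-1)$, so $\frac{b-1}{b-a} = \frac{e^{-t+\eta}-1}{e^{-t}(e^\eta-1)} = \frac{e^\eta - e^t}{e^\eta - 1}$ and $\frac{1-a}{b-a} = \frac{e^t - 1}{e^\eta - 1}$. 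Substituting and taking $\tfrac{1}{\alpha-1}\log(\cdot)$ yields exactly $\heps_{BR}(t;\alpha)$.

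For tightness, I construct an explicit pair of two-point distributions on $\{y_1, y_2\}$ that saturates the chord bound. Set
\[
Q(y_1) = \frac{e^\eta - e^t}{e^\eta - 1}, \qquad Q(y_2) = \frac{e^t - 1}{e^\eta - 1},
\]
and define $P(y_1) = a\, Q(y_1)$, $P(y_2) = b\, Q(y_2)$. A direct check confirms $P(y_1)+P(y_2) = a\, Q(y_1) + b\, Q(y_2) = 1$ (this is just $\mathbb{E}_Q[L]=1$), the ratios take exactly the two values $e^{-t}$ and $e^{-t+\eta}$, and for this two-point distribution the chord inequality is an equality, so the displayed bound is attained.

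The main obstacle is essentially bookkeeping: verifying that the two-point weights match the coefficients of $e^{-t\alpha}$ and $e^{(-t+\eta)\alpha}$ in the lemma statement, which is the short algebraic simplification above. There is no deeper difficulty, since convexity of $u^\alpha$ for $\alpha > 1$ plus the single affine constraint $\mathbb{E}_Q[L]=1$ immediately pins down the extremal two-point configuration.
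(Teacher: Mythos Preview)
Your proof is correct and essentially the same as the paper's. The paper phrases the key inequality via a randomized rounding $A(z)\in\{e^{-t},e^{-t+\eta}\}$ with $\mathbb{E}[A(z)]=z$ and then applies Jensen's inequality, but this is exactly your chord (secant) bound for the convex function $u\mapsto u^\alpha$ on $[e^{-t},e^{-t+\eta}]$; the rounding probabilities are precisely the barycentric weights $(b-u)/(b-a)$ and $(u-a)/(b-a)$, and the tightness construction is the same two-point pair.
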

\begin{proof}
We will first show achieveability of the bound by describing a pair of distributions $P$ and $Q$. Define $P$ and $Q$ as follows:
\begin{align*}
f_P(x) &= \begin{cases}
\frac{e^{-t} \left(e^{\eta }-e^t\right)}{e^{\eta }-1} & \text{ if $x = 0$}\\
\frac{\left(e^t-1\right) e^{\eta -t}}{e^{\eta }-1} & \text{ if $x = 1$}\\
\end{cases}\\
f_Q(x) &= \begin{cases}
\frac{e^{\eta }-e^t}{e^{\eta }-1} & \text{ if $x = 0$}\\
\frac{e^t-1}{e^{\eta }-1} & \text{ if $x = 1$}\\
\end{cases}
\end{align*}

It is easy to show that $P$ and $Q$ are valid probability distributions, $\frac{f_P(0)}{f_Q(0)} = e^{-t}$ and $\frac{f_P(1)}{f_Q(1)} = e^{-t + \eta}$ as desired, and that the R{\'e}nyi divergence is exactly equal to the required bound.

We next prove that this is the worst case bound, using a proof technique due to \cite{bun-steinke-16, DPorg-pdp-to-zcdp}.
Define the randomized rounding function $A: [e^{-t}, e^{-t + \eta}] \to \{e^{-t}, e^{-t + \eta}\}$ such that 
\begin{align*}
\mathbb{P}(A(z) = e^{-t}) &= \frac{e^\eta - z e^t}{e^\eta - 1}\\
\mathbb{P}(A(z) = e^{-t + \eta}) &= \frac{z e^t - 1}{e^\eta - 1}
\end{align*}
Note that this satisfies $\mathbb{E}[A(z)]=z$. Thus, by Jensen's inequality (and by the convexity of $x \mapsto x^\alpha$), 

$$
z^\alpha = \mathbb{E}[A(z)]^\alpha \le \mathbb{E}[A(z)^\alpha] =
\frac{e^\eta - z e^t}{e^\eta - 1} e^{-t \alpha} +
\frac{z e^t - 1}{e^\eta - 1} e^{(-t + \eta)\alpha}.
$$

Consider any distributions $P$ and $Q$ such that $\log\left(\frac{P(x)}{Q(x)}\right) \in [-t, -t + \eta]$ for all $x \in \supp(P) \cup \supp(Q)$.
Setting $z = P(x) / Q(x)$, we have
\begin{align*}
e^{(\alpha - 1)\dr{P}{Q}} &= \mathbb{E}_{x \sim Q}\left[\left(
\frac{P(x)}{Q(x)}
\right)^\alpha\right]\\
&\le \mathbb{E}_{x \sim Q}\left[\frac{e^\eta - \frac{P(x)}{Q(x)} e^t}{e^\eta - 1} e^{-t \alpha} +
\frac{\frac{P(x)}{Q(x)} e^t - 1}{e^\eta - 1} e^{(-t + \eta)\alpha}
\right]\\
&= \mathbb{E}_{x \sim Q}\left[\frac{e^\eta - \frac{P(x)}{Q(x)} e^t}{e^\eta - 1} e^{-t \alpha}\right] +
\mathbb{E}_{x\sim Q}\left[\frac{\frac{P(x)}{Q(x)} e^t - 1}{e^\eta - 1} e^{(-t + \eta)\alpha}
\right]\\
&= \frac{e^\eta - e^t}{e^\eta - 1} e^{-t \alpha} +
\frac{e^t - 1}{e^\eta - 1} e^{(-t + \eta)\alpha}.&\qedhere
\end{align*}
\end{proof}

The above lemma allows us to easily compute the tight RDP bound for $\eta$-BR mechanism by optimizing over the value of $t$.

\begin{theorem}[RDP for Bounded Range Mechanisms] \label{thm:br-rdp}
Any $\eta$-BR mechanism satisfies $(\alpha, \hat{\eps}_{BR}(\alpha))$-RDP for all $\alpha > 1$ where 
\begin{align*}
\hat{\eps}_{BR}(\alpha) &= \frac{1}{\alpha-1} \log\left(\frac{(e^{\alpha \eta} - 1)^\alpha \left(\frac{\alpha (e^{\alpha \eta} - e^{\eta})}{\alpha - 1}\right)^{1-\alpha}} {\alpha (e^\eta - 1)}\right).
\end{align*}
Furthermore, the above bound is tight, i.e. there exists an $\eta$-BR mechanism which is tightly-$(\alpha, \hat{\eps}_{BR}(\alpha))$-RDP. Moreover, we have $\hat{\eps}_{BR}(1) := \lim_{\alpha \to 1} \heps_{BR}(\alpha)$ is equal to $\frac{\eta}{e^\eta - 1} + \log\left(\frac{e^\eta - 1}{\eta}\right) - 1$.
\end{theorem}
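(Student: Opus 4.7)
The plan is to derive the tight RDP bound by maximizing the family of upper bounds from \Cref{lem:br-rdp-one} over the parameter $t \in [0, \eta]$. Fix $\alpha > 1$ and define
$$g(t) = \frac{e^\eta - e^t}{e^\eta - 1}\, e^{-t\alpha} + \frac{e^t - 1}{e^\eta - 1}\, e^{(-t+\eta)\alpha},$$
so that $\heps_{BR}(t;\alpha) = \frac{1}{\alpha - 1}\log g(t)$. The tight bound for $\eta$-BR mechanisms is $\sup_{t \in [0, \eta]} \frac{1}{\alpha - 1}\log g(t)$.

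First, compute $g'(t)$; a direct calculation shows that the unique zero is $t^*$ with $e^{t^*} = \frac{\alpha(e^{\eta\alpha} - e^\eta)}{(\alpha - 1)(e^{\eta\alpha} - 1)}$. Verify $t^* \in [0, \eta]$: after rearrangement, $e^{t^*} \geq 1$ reduces to $\frac{e^{\eta\alpha} - 1}{\eta\alpha} \geq \frac{e^\eta - 1}{\eta}$, and $e^{t^*} \leq e^\eta$ reduces to $\frac{e^{\eta(\alpha-1)} - 1}{\eta(\alpha - 1)} \leq \frac{e^{\eta\alpha} - 1}{\eta\alpha}$, both of which follow from the well-known monotonicity of $x \mapsto (e^x - 1)/x$. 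To confirm that $t^*$ yields a \emph{maximum}, observe that $g(0) = g(\eta) = 1$ (each boundary case forces $P \equiv Q$), while $g(t) \geq 1$ everywhere by Jensen's inequality applied to $x \mapsto x^\alpha$ (the same rounding argument used inside the proof of \Cref{lem:br-rdp-one}, with $\mathbb{E}[P/Q] = 1$). Since $t^*$ is the only interior critical point, it must be the global maximizer.

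Next, plug $t^*$ back into $g$. Factoring $e^{-t^*\alpha}$ gives
$$g(t^*) = \frac{e^{-t^*\alpha}}{e^\eta - 1}\Bigl[e^\eta - e^{\eta\alpha} + e^{t^*}(e^{\eta\alpha} - 1)\Bigr].$$
Using the defining equation $e^{t^*}(e^{\eta\alpha} - 1) = \frac{\alpha(e^{\eta\alpha} - e^\eta)}{\alpha - 1}$, the bracketed expression simplifies to $\frac{e^{\eta\alpha} - e^\eta}{\alpha - 1}$. Substituting $e^{-t^*\alpha} = \bigl(\frac{(\alpha - 1)(e^{\eta\alpha} - 1)}{\alpha(e^{\eta\alpha} - e^\eta)}\bigr)^\alpha$ and rearranging then yields the claimed closed form. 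Tightness is automatic: \Cref{lem:br-rdp-one} already exhibits a pair of two-valued distributions realizing equality for each fixed $t$, so taking $t = t^*$ produces an $\eta$-BR mechanism (on two inputs) that achieves $\heps_{BR}(\alpha)$.

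Finally, for $\heps_{BR}(1)$, rather than grinding L'Hôpital on the closed form, compute $D_{KL}(P \| Q)$ directly for the explicit worst-case two-valued distributions of \Cref{lem:br-rdp-one} at the limiting value of $t^*$. A short expansion of $\frac{\alpha(e^{\eta\alpha} - e^\eta)}{(\alpha - 1)(e^{\eta\alpha} - 1)}$ as $\alpha \to 1^+$ gives $\lim_{\alpha \to 1^+} e^{t^*} = \frac{\eta e^\eta}{e^\eta - 1}$. Since the two-valued mechanism has $D_{KL}(P \| Q) = -t^* + \eta \cdot f_P(1)$, plugging in $f_P(1) = \frac{(e^{t^*} - 1)e^{\eta - t^*}}{e^\eta - 1}$ collapses to exactly $\frac{\eta}{e^\eta - 1} + \log\!\left(\frac{e^\eta - 1}{\eta}\right) - 1$. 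The main obstacle is the algebraic simplification of $g(t^*)$ into the product form stated in the theorem; everything else is a relatively routine application of the tools already set up.
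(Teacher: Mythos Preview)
Your argument is correct and follows essentially the same route as the paper: reduce to $\sup_{t\in[0,\eta]}\heps_{BR}(t;\alpha)$ via \Cref{lem:br-rdp-one}, locate the unique interior critical point $t^*$, substitute back, and read off tightness from the explicit two-point construction. You add the explicit check that $t^*\in[0,\eta]$ via monotonicity of $x\mapsto(e^x-1)/x$, which the paper glosses over.

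The one genuine difference is the computation of $\heps_{BR}(1)$. The paper applies L'H\^opital directly to the closed form $\heps_{BR}(\alpha)$. You instead observe that $t^*(\alpha)\to \eta-\log\!\bigl(\tfrac{e^\eta-1}{\eta}\bigr)$ as $\alpha\to1^+$ and evaluate $D_{KL}(P\|Q)$ for the two-point mechanism at that limiting $t$. This is valid by joint continuity of $\heps_{BR}(t;\alpha)$ in $(t,\alpha)$ on $(0,\eta)\times[1,\infty)$ (Rényi divergence of a fixed pair is continuous in $\alpha$ with limit the KL divergence, and everything is smooth in $t$); you should state this continuity explicitly, since it is what licenses interchanging the limit with the evaluation at $t^*(\alpha)$. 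The payoff is that your calculation is arguably cleaner and, as a bonus, it directly produces the extremal mechanism that the paper later reconstructs separately in the proof of \Cref{thm:br-cdp}.
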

\begin{proof}
From \Cref{lem:br-rdp-one}, any $\eta$-BR mechanism is $(\alpha, \heps(\alpha))$-RDP where $\heps_{BR}(\alpha) = \sup_{t \in [0, \eta]} \heps_{BR}(t; \alpha)$. Since $\heps_{BR}(t; \alpha)$ is continuous and differentiable, it must achieve a maximum value at some $t = t^* \in [0, \eta]$, and $t^*$ must satisfy (i) $t^* \in \{0, \eta\}$, or (ii) its derivative is zero. It is clear that we cannot be in case (i) since $\heps_{BR}(0; \alpha) = \heps_{BR}(\eta; \alpha) = 0$ but $\heps_{BR}(t; \alpha) > 0$ for $t \in (0, \eta)$.

The derivative of $\heps_{BR}(t; \alpha)$ is
\begin{align*}
\frac{\mathrm{d}}{\mathrm{d}t} \heps_{BR}(t; \alpha)
&= 
\frac{1}{\alpha - 1} \left(
\frac{e^{\alpha  \eta +t}-e^t}{e^{\eta }+\left(e^t-1\right) e^{\alpha  \eta }-e^t}-\alpha\right).
\end{align*}
Its only root (in $[0, \eta]$) is $t^* = \log \left(\frac{\alpha  \left(e^{\alpha  \eta }-e^{\eta }\right)}{(\alpha -1) \left(e^{\alpha  \eta }-1\right)}\right)$. Thus, we have
\begin{align*}
\heps_{BR}(\alpha) = \heps_{BR}(t^*; \alpha) =  \frac{1}{\alpha-1} \log\left(\frac{(e^{\alpha \eta} - 1)^\alpha \left(\frac{\alpha (e^{\alpha \eta} - e^{\eta})}{\alpha - 1}\right)^{1-\alpha}} {\alpha (e^\eta - 1)}\right).
\end{align*}

The tightness follows from that of \Cref{lem:br-rdp-one} by letting $M: \{0, 1\} \to \supp(P) \cup \supp(Q)$ be the mechanism such that $M(0) \sim P$ and $M(1) \sim Q$ where $P, Q$ are the tight example for $t = t^*$ from \Cref{lem:br-rdp-one}.

 It remains to compute
\begin{align*}
\hat{\eps}_{BR}(1) = \lim_{\alpha \to 1} \heps_{BR}(\alpha) = \lim_{\alpha \to 1}
\frac{1}{\alpha-1} \log\left(
\frac{(e^{\alpha \eta} - 1)^\alpha 
\left(\frac{\alpha (e^{\alpha \eta} - e^{\eta})}{\alpha - 1}\right)^{1-\alpha}
}
{\alpha (e^\eta - 1)}
\right).
\end{align*}

This is an indeterminate form 0/0, so we can apply L'Hôpital's Rule by considering $f(\alpha) = \frac{N(\alpha)}{D(\alpha)}$ with $D(\alpha) = \alpha - 1$. We have $D'(\alpha) = 1$, and

\begin{align*}
N'(\alpha) &= 
\frac{(\alpha -1) e^{\eta } \eta }{e^{\eta }-e^{\alpha  \eta }}
+\frac{\alpha  \eta }{e^{\alpha  \eta }-1}
-\log \left(\frac{\alpha  \left(e^{\alpha  \eta }-e^{\eta }\right)}{(\alpha -1) \left(e^{\alpha  \eta }-1\right)}\right)
+\eta\\
\lim_{\alpha \to 1} N'(\alpha) &= 
-1 + \frac{\eta}{e^\eta - 1} + \eta - \log\left(
\lim_{\alpha \to 1} \frac{\alpha  \left(e^{\alpha  \eta }-e^{\eta }\right)}{(\alpha -1) \left(e^{\alpha  \eta }-1\right)} \right)\\
&= -1 + \frac{\eta}{e^\eta - 1} + \eta - \log\left(
\frac{e^{\eta } \eta }{e^{\eta }-1}
\right)\\
&= \frac{\eta}{e^\eta - 1} + \log\left(
\frac{e^{\eta }-1}{\eta }\right) - 1.&\qedhere
\end{align*}
\end{proof}

\subsection{From RDP to zCDP}

Finally, we show below that any $\eta$-BR mechanism is $\rho$-zCDP for $\rho = \heps_{BR}(1)$ (\Cref{thm:br-cdp}). 

\begin{proof}[Proof of \Cref{thm:br-cdp}]
Let $f(\alpha) := \frac{\hat{\eps}_{BR}(\alpha)}{\alpha}$ where $\hat{\eps}_{BR}(\alpha)$ is from \Cref{thm:br-rdp}.
Our proof will proceed by showing that $f$ is decreasing in $\alpha$ (for $\alpha > 1$), which implies that any $\eta$-BR mechanism is $\heps_{BR}(1)$-zCDP. To do this, let
\begin{align*}
g:= &\frac{\mathrm{d}}{\mathrm{d}\alpha} f(\alpha) = \\&= \frac{(1-2 \alpha ) \log \left(e^{\alpha  \eta }-e^{\eta }\right)+(2 \alpha -1) \log \left((\alpha -1) \left(e^{\eta }-1\right)\right)+\alpha  \left((\alpha -1) \eta +\alpha  \log \left(\frac{\alpha  \left(e^{\alpha  \eta }-e^{\eta }\right)}{(\alpha -1) \left(e^{\alpha  \eta }-1\right)}\right)\right)}{(\alpha -1)^2 \alpha ^2}\\
&+\frac{\eta }{(\alpha -1) \left(e^{\alpha  \eta }-1\right)} +\frac{e^{\eta } \eta }{\alpha  e^{\eta }-\alpha  e^{\alpha  \eta }}
\end{align*}

Let us view $g$ as function of $\eta$, i.e. $g = g_\alpha(\eta)$. Recall that we wish to show that $g_\alpha(\eta) \le 0$ for all $\eta > 0, \alpha > 1$. Simple calculation verifies that $\lim_{\eta \to 0^+}g_\alpha(\eta) = 0$ for all $\alpha > 1$, so the desired inequality will follow from showing that $g'_\alpha(\eta) \le 0$. For this, we first compute the derivative:
\begin{align*}
g'_\alpha(\eta) &= \frac{
h_1(\alpha, \eta)  h_2(\alpha, \eta)
\text{csch}^2\left(\frac{\alpha  \eta }{2}\right) \text{csch}^2\left(\frac{1}{2} (\eta -\alpha  \eta )\right)}{8 (\alpha -1)^2 \alpha ^2}
\end{align*}
where
\begin{align*}
h_1(\alpha) &= \left(-\alpha ^2 \cosh (\eta -\alpha  \eta )+(\alpha -1)^2 \cosh (\alpha  \eta )+2 \alpha -1\right)\\
h_2(\alpha) &= (\alpha -1) \alpha  \eta +\sinh (\alpha  \eta )-\coth \left(\frac{\eta }{2}\right) (\cosh (\alpha  \eta )-1).
\end{align*}

Given the rest of the terms are trivially non-negative, it remains to show that $h_1$ is positive and $h_2$ is negative. To do this, recall the identities $\cosh(2x) = 1 + 2\sinh^2 x$ and $\sinh(x - y) = \sinh x \cosh y - \cosh x \sinh y$.

Using the first identity, we can simplify $h_1 \ge 0$ as follows
\begin{align*}
h_1(\alpha) \ge 0 &\iff (\alpha -1)^2 \cosh (\alpha  \eta ) + \alpha ^2-(\alpha -1)^2 \ge \alpha ^2 \cosh (\eta (\alpha - 1))\\
&\iff \frac{\cosh (\alpha  \eta )-1}{\alpha^2} \ge \frac{\cosh (\eta (\alpha - 1))-1}{(\alpha - 1)^2}\\
&\iff \frac{\sinh^2(\alpha  \eta / 2)}{\alpha^2} \ge \frac{\sinh^2(\eta (\alpha - 1)/2)}{(\alpha - 1)^2}\\
&\iff \frac{\sinh(\alpha  \eta / 2)}{\alpha  \eta / 2} \geq \frac{\sinh(\eta (\alpha - 1)/2)}{\eta(\alpha - 1)/2},
\end{align*}
and the last inequality is true due to \Cref{lem:sinhx-divx-inc}.

Similarly, using the two identities, we can simplify $h_2$ as follows:
\begin{align*}
(\alpha -1) \alpha  \eta - h_2 &= -2\sinh \left(\frac{\alpha  \eta}{2}\right)\cosh \left(\frac{\alpha  \eta}{2}\right)+\coth \left(\frac{\eta }{2}\right) \left(2\sinh^2 \left(\frac{\alpha  \eta}{2}\right)\right) \\
&= 2\sinh \left(\frac{\alpha  \eta}{2}\right) \left(\frac{\cosh \left(\frac{\eta }{2}\right)\sinh \left(\frac{\alpha  \eta}{2}\right) - \sinh\left(\frac{\eta }{2}\right) \cosh \left(\frac{\alpha  \eta}{2}\right)}{\sinh\left(\frac{\eta }{2}\right)}\right) \\
&= 2\sinh \left(\frac{\alpha  \eta}{2}\right) \left(\frac{\sinh \left(\frac{(\alpha  - 1)\eta }{2}\right)}{\sinh\left(\frac{\eta}{2}\right)}\right) \\
&= \alpha \cdot 2\sinh \left(\frac{(\alpha - 1)\eta}{2}\right) \cdot \left(\frac{\frac{\sinh \left(\frac{\alpha\eta }{2}\right)}{\frac{\alpha\eta }{2}}}{\frac{\sinh\left(\frac{\eta}{2}\right)}{\frac{\eta}{2}}}\right) \\
(\text{\Cref{lem:sinhx-divx-inc}}) &\geq \alpha \cdot 2\left(\frac{(\alpha - 1)\eta}{2}\right) \cdot 1 \\ &= \alpha(\alpha - 1) \eta.
\end{align*}
Thus, $h_2 \leq 0$ as desired.

This concludes the proof that $f$ is decreasing in $\alpha$. Hence, any $\eta$-BR mechanism is $\heps_{BR}(1)$-zCDP. 

The tightness follows from letting $M: \{0, 1\} \to \{0, 1\}$ be the mechanism\footnote{Note that we define a new mechanism here rather than using the tightness in \Cref{thm:br-rdp} and taking $\alpha \to 1$ because the tight mechanisms in \Cref{thm:br-rdp} are different for different values of $\alpha$.} such that
\begin{align*}
\bbP(M(0) = y) &= \begin{cases}
\frac{e^{-t} \left(e^{\eta }-e^t\right)}{e^{\eta }-1} & \text{ if $y = 0$}\\
\frac{\left(e^t-1\right) e^{\eta -t}}{e^{\eta }-1} & \text{ if $y = 1$}\\
\end{cases}\\
\bbP(M(1) = 1) &= \begin{cases}
\frac{e^{\eta }-e^t}{e^{\eta }-1} & \text{ if $x = 0$}\\
\frac{e^t-1}{e^{\eta }-1} & \text{ if $x = 1$}\\
\end{cases}
\end{align*}
for $t = \eta - \log\left(\frac{e^{\eta} - 1}{\eta}\right)$.
It is simple to verify that $M$ is $\eta$-BR and that $\dra{M(0)}{M(1)}{1} = \heps_{BR}(1)$, which implies the claimed tightness.
\end{proof}

\section{Conclusion}
In this work, we continue the direction started by \cite{DPorg-pdp-to-zcdp} in finding \emph{exactly} tight zCDP bounds for differentially private mechanisms. In particular, we derive tight zCDP bounds for fundamental mechanisms, including Laplace and Discrete Laplace mechanisms, $k$-Randomized Response (for sufficiently small $k$) and RAPPOR. Given the ubiquity of these mechanisms and the wide adoption of zCDP for privacy accounting,
we hope that our precise characterizations provide additional fundamental tools for accurate privacy accounting both in theory and practice. 

The obvious open problem from our work is to give a precise characterization for zCDP of $k$-RR when $k$ is large. It is unclear if a closed-form expression can be derived in this setting. Nevertheless, an efficient numerical algorithm for computing a precise bound should still be useful in practice. Finally, there are still many widely used mechanisms for which tight zCDP characterizations are not yet known (e.g. Sparse Vector Technique~\cite{DworkNRRV09}, continuous and discrete staircase mechanism \cite{geng2014optimal}, report noisy max / permute and flip \cite{dworkrothbook, mckenna2020permute, ding2021permute}, or the optimized unary encoding mechanism \cite{wang-opt-unary}); it would be interesting to prove tight bounds for them as well.

\begin{figure}[t]
  \centering
  \includegraphics[width =.8\linewidth]{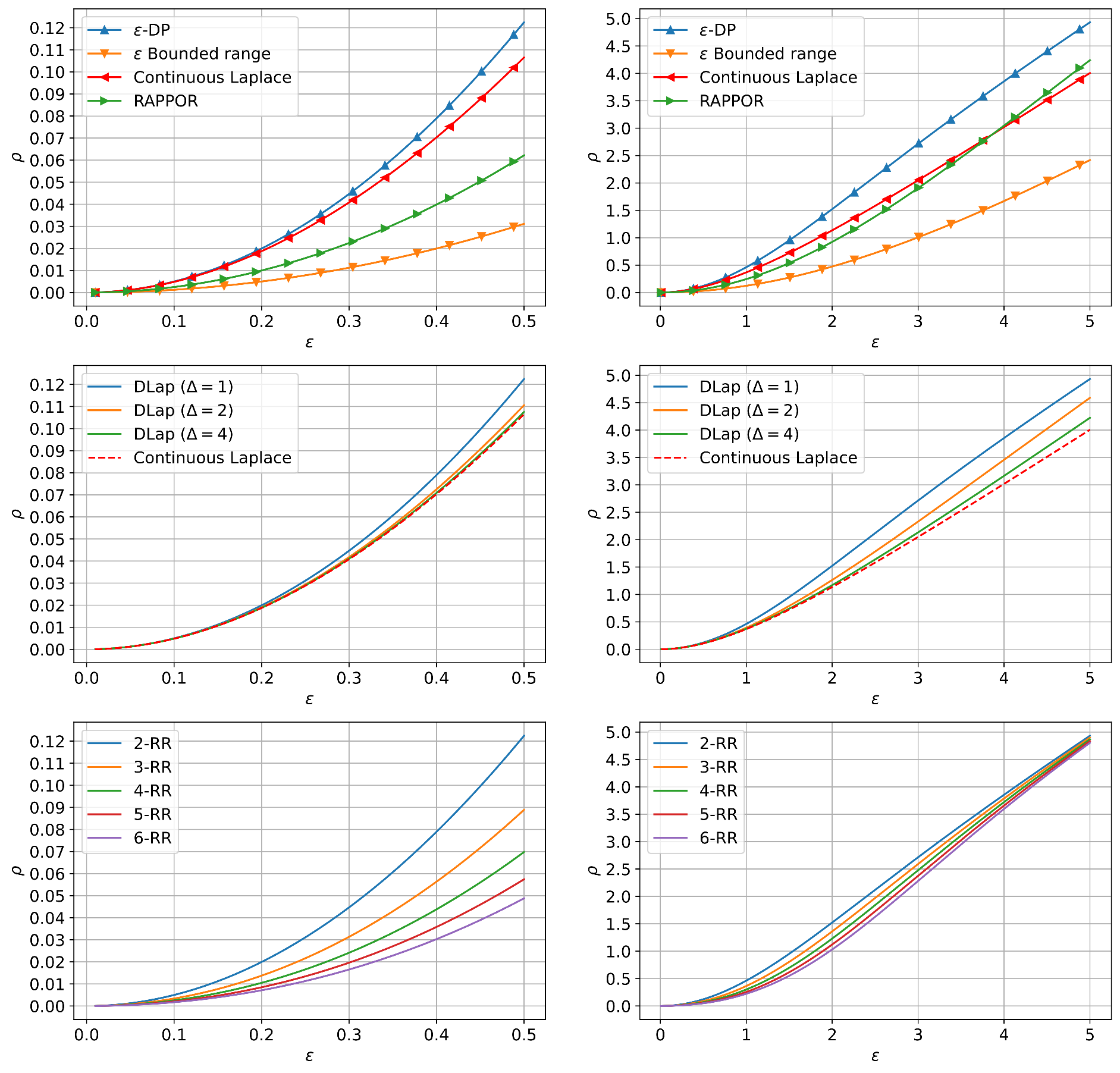}
  \caption{Tight zCDP bounds for all mechanisms as a function of $\eps$. All bounds use the formulas listed in \Cref{tab:main}.}
  \label{fig:others}
\end{figure}

\section*{Acknowledgment}
We would like to thank Thomas Steinke for insightful discussions, and for encouraging us to find a tight bound for bounded range mechanisms.

\bibliographystyle{alpha}
\bibliography{ref}

\end{document}